\newcommand{\Set}{set}
\newcommand{\Nat}{ {\omega} }
\newcommand{\Real}{ {\mathbb{R}} }
\newcommand\closure[1]{\overline{#1}}
\newtheorem{theorem}{Theorem}
\newtheorem{proposition}{Proposition}
\newtheorem{definition}{Definition}
\newtheorem{example}{Example}
\newtheorem{observation}{Observation}
\newcommand{\Prob}{ {\mathbb P} }
\newcommand{\ES}{ {\mathbb S} }
\newcommand{\Obs}{{\mathcal{O}}}
\newtheorem{lemma}{Lemma}
\newtheorem{corollary}{Corollary}
\title{On the Solvability of Inductive Problems: A Study in Epistemic Topology}
\author{Alexandru Baltag \qquad\qquad  Nina Gierasimczuk \qquad\qquad Sonja Smets 
\institute{Institute for Logic, Language and Computation\\
University of Amsterdam, The Netherlands}
\email{\qquad\qquad A.Baltag@uva.nl \qquad\qquad  Nina.Gierasimczuk@gmail.com  \quad\quad S.J.L.Smets@uva.nl}
}
\begin{document}
\maketitle

\begin{abstract}
We investigate the issues of inductive problem-solving and learning by doxastic agents. We provide topological characterizations of solvability and learnability, and we use them to prove that AGM-style belief revision is ``universal", i.e., that every solvable problem is solvable by AGM conditioning.
\end{abstract}

\section{Introduction}

When in the course of observations it becomes necessary for agents to arrive at a generalization, they should declare, along with their conjecture, the extent of their certainty.
The problem of induction seems formidable if a standard of absolute certainty is imposed on the learner.
Indeed, as is well-known in Philosophy of Science, the so-called problem of empirical underdetermination (i.e., the fact that typically the data are compatible with more than one hypothesis) rules out any chance of obtaining infallible knowledge in empirical research.
But apart from the conclusions based on absolute certainty (cf. \cite{Gie10,DG11,Gierasimczuk:2012aa}), learners can produce hypotheses
based on \emph{beliefs}. It is thus strange that Formal Learning Theory and Belief Revision Theory developed completely independently from each other, and that they have generally maintained their distance ever since.

However, there does exist a line of research that combines belief revision with learning-theoretic notions, line pursued by Kelly \cite{Kel98,Kel98a}, Kelly, Schulte and Hendricks \cite{KSH95}, Martin and Osherson \cite{MO97} and ourselves \cite{Gie10,BGS11,Baltag:2014ac,Gierasimczuk:2013aa}. In this paper we continue this research program, using topological characterizations and methods.

An \emph{inductive problem} consists of a state space, a family of ``potential observations", and
a ``question" (i.e., a partition of the state space). These observations provide data for learning. The problem is \emph{solvable} if there exists a learner that, after observing ``enough" pieces of data, eventually stabilizes on the correct answer.
A special case of solvability is \emph{learnability in the limit}, corresponding to the solvability of the ``ultimate" question: `What is the actual state of the world?'. This notion matches the usual learning-theoretic concept of \emph{identifiability in the limit} \cite{Sol64ab,Gold:1965aa,Put65}.

The aim of the paper is twofold. First, we give topological characterizations of the notions of solvability (and learnability), in terms of topological separation principles. Intuitively, the ability to reliably learn the true answer to a question, is related to the possibility to ``separate" answers by observations. The second goal is to use these topological results to look at the ``solving power" of well-behaved doxastic agents, such as the ones whose beliefs satisfy the usual $KD45$ postulates of doxastic logic, as well as the standard AGM postulates of rational belief-revision \cite{AGM}. We look at
a particularly simple and canonical type of doxastic agent, who forms beliefs by \emph{AGM conditioning}.

Our main result is that AGM conditioning is \emph{universal for problem-solving}, i.e., that every solvable problem can be solved by AGM conditioning.
This means that (contrary to some prior claims), AGM belief-revision postulates are not an obstacle to problem-solving. As a special case,
it follows that AGM conditioning is also ``universal for learning" (every learnable space can be learned by conditioning).\footnote{This special case is a topological translation of one of our previous results \cite{BGS11,Baltag:2014ac}. However, the result about problem-solving universality is not only new and much more general, but also much harder to prove, involving new topological notions and results.}

The close connections between Epistemology and General Topology have already been noticed long ago \cite{Steven-Vickers:1996aa, Kell96}.
Based on these connections, Kevin Kelly started a far-reaching program \cite{Kell96,Kel08} meant to import ideas and techniques from both Formal Learning Theory and Topology into mainstream Epistemology, and show their relevance to the induction problem in Philosophy of Science.
A further connection is the one with Ockham's Razor, that would
\begin{quote}\emph{(...) guarantee that always choosing the simplest theory compatible with experience and hanging on to it while it remains the simplest is both necessary and sufficient for efficiency of inquiry.}
{\cite{Kel08}}\end{quote}
Simplicity has been claimed to have topological characteristics---the simplicity order should in some way follow the structure imposed on the uncertainty range by possible tests and observations. It has also been linked with the notion of minimal mind change, where the learning agent keeps the conjecture changes to a minimum \cite{Kell96,Schulte:1996aa}.

Taken together, our results can be seen as a vindication both of the general topological program in Inductive Epistemology \cite{Kell96,Kel08} and of the AGM Belief Revision Theory \cite{AGM}. On the first front, our general topological characterizations of learning-theoretic concepts seem to confirm Kelly's long-standing claim that Inductive Epistemology can be seen mathematically as a branch of General Topology. On the second front, our universality result seems to vindicate Belief Revision Theory as a canonical form of learning.\footnote{And in the same time (if we adopt a ``simplicity" interpretation of the prior), this last result can be seen as a vindication of Ockham's razor (in line with Kevin Kelly's program).}

\section{Epistemic Spaces and Inductive Problems}\label{epistemic_spaces}

\begin{definition}
An \emph{epistemic space} is a pair $\ES=(S,\Obs)$ consisting of a state space $S$
and a countable (or finite) set of \emph{observable properties} (``data") ${\Obs}\subseteq \mathcal{P}(S)$.
We denote by by $\Obs_s := \{O\in \Obs~|~ s\in O\}$ the set of all observable properties (holding) at a given state $s$.
\end{definition}
One can think of the states in $S$ as ``possible worlds", in the tradition of Kripke and Lewis. The sets $O\in \Obs$ represent properties of the world that are in principle observable: if true, such a property will eventually be observed (although there is no upper bound on the time needed to come to observe it).

To keep things simple, we assume that at each step of the learning process only one property is observed. As for the countability of the set $\Obs$, it is natural to think of observables as properties which can be expressed by means of a language or numerical coding system, generated from a grammar with a finite vocabulary. Any such family $\Obs$ will be (at most) countable.

We denote by ${\mathcal O}^{\cap}$ the family of all finite intersections of observations from ${\mathcal O}$, and by ${\mathcal O}^\ast$ the family of all finite sequences of observations. Such a finite sequence $\sigma=(O_0, O_1,\ldots, O_i)\in {\mathcal O}^\ast$ is called a \emph{data sequence}, and its $i$-th component is denoted by $\sigma_i:=O_i$. It is easy to see that both  ${\mathcal O}^{\cap}$ and ${\mathcal O}^\ast$ are countable.

A \emph{data stream} is a countable sequence $\vec{O}=(O_0, O_1, \ldots)\in \Obs^{\Nat}$ of data from $\Obs$ (here, $\Nat$ is the set of natural numbers, so $\Obs^{\Nat}$ is the set of all maps assigning an observable property to every natural number).
We use the following notation: $\vec{O}_n$ is the $n$-th element in $\vec{O}$; $\vec{O}[n]$ is the initial segment of $\vec{O}$ of length $n$, $(O_0, \ldots, O_{n-1})$; $\Set(\vec{O}):=\{O~|~O\text{ is an element of } \vec{O}\}$ is the set of all data in $\vec{O}$;  $\ast$ is the concatenation operator on strings.

The intuition is that at stage $n$ of a data stream, the agent observes the information in
$O_n$.
A data stream captures a possible future history of
observations in its entirety, while a data sequence captures only a
finite part of such a history.

Given a state $s\in S$, a \emph{data stream for $s$} is a stream  $\vec{O}\in \Obs^{\Nat}$ such that $\Obs_s = \{O\in \Obs~|~ \bigcap_{i=0}^n O_i\subseteq O
\text{ for some } n\in \Nat\}$. Such a stream is ``sound" (every data in $\vec{O}$ is true at $s$) and ``complete" (every true data is entailed by some finite set of observations in $\vec{O}$).

\begin{example}\label{reals}
Let our epistemic space $\ES=(S,\Obs)$ be the real numbers, with observable properties given by open intervals with rational endpoints:
$S:= \Real$, $\Obs:= \{(a,b)~|~ a, b\in Q, a\leq b\}$, where $(a,b):= \{x\in \Real~|~ a<x<b\}$. For instance, observables may represent measurements of a physical quantity (such as a position along a one-dimensional line) that takes real numbers as its possible values.
In such case, for any state $x\in \Real$ and any two sequences $a_n , b_n\in Q$ of rational numbers, such that $a_n\leq x\leq b_n$ and both sequences converge to $x$, the sequence $(a_0, b_0), \ldots, (a_n, b_n), \ldots$ is a (sound and complete) data stream for $x$.

Other examples include \emph{standard $n$-dimensional Euclidean spaces}, e.g., $S=R^3$ with $\Obs$ consisting of all open balls with rational radius and center.
\end{example}

\begin{definition}
An \emph{inductive problem} is a pair $\Prob=(\ES, \mathcal{Q})$ consisting of an epistemic space
$\ES = (S,\Obs)$ together with a ``question" $\mathcal{Q}$, i.e., a partition\footnote{This means that $\bigcup_{i\in I} A_i=S$, and $A_i\cap A_j=\emptyset$ for all $i\not=j$.} of $S$. The cells $A_i$ of the partition $\mathcal{Q}$ are called \emph{answers}. Given $s\in S$, the unique $A\in {\mathcal Q}$ with $s\in A$ is called \emph{the answer to ${\mathcal Q}$ at $s$}, and denoted $A_s$.
We say that a problem $\Prob'=(\ES, \mathcal{Q'})$ is a \emph{refinement} of another problem $\Prob=(\ES, \mathcal{Q})$ (or that the corresponding question ${\mathcal Q'}$ is a \emph{refinement} of the question ${\mathcal Q}$) if every answer of ${\mathcal Q}$ is a disjoint union of answers of ${\mathcal Q'}$.
\end{definition}
The \emph{most refined} question concerns the identity of the real world.
\begin{example}\label{que_id}
The \emph{learning question} on a space $S$
is ${\mathcal Q} =\{\{s\}~|~ s\in S\}$ and corresponds to `What is the actual state?'.
\end{example}

\begin{example}\label{refinement}
Let $\ES=(S, {\mathcal O})$, where $S=\{s,t,u,v)$, ${\mathcal O}=\{U,V, P,Q\}$, with $U=\{s,t\}$, $V=\{s\}$, $P=\{u,v\}$, $Q=\{u\}$.
Take the problem $\Prob= (\ES, {\mathcal Q})$, given by the question ${\mathcal Q}=\{ \{t,u\}, \{s,v\}\}$ depicted on the left-hand side of Figure~\ref{Counterexample}. This can obviously be refined to obtain the problem  $\Prob'= (\ES, {\mathcal Q'})$ given by the learning question
 ${\mathcal Q}=\{ \{s\}, \{t\}, \{u\}, \{v\}\}$ for this space, as depicted on the right-hand side of Figure~\ref{Counterexample}.
\end{example}
\begin{figure}
        \centering
        \def\Qset{(0,0) ellipse (2 and 1) }
\def\Pset{(-1,0) ellipse (3 and 1.5)}
\def\Uset{(-1,-4) ellipse (3 and 1.5)}
\def\Vset{(-2,-4) ellipse (2 and 1)}

\tikzstyle{P} = [draw,black,thick]
                \begin{tikzpicture}[scale=0.5]
\tikzset{
    >=stealth'}

\draw[fill=black,color=black] (-3,0) node[below right] {${t}$} circle (2pt);
\draw[fill=black,color=black] (1,0) node[below right] {${s}$} circle (2pt);
\draw[fill=black,color=black] (-3,-4) node[below right] {${u}$} circle (2pt);
\draw[fill=black,color=black] (1,-4) node[below right] {${v}$} circle (2pt);

    \begin{scope}[rotate=0]
        \path[P] (0,0) \Qset ;

    \end{scope}

        \begin{scope}[rotate=0]
        \path[P] (0,-4) \Uset ;

    \end{scope}
            \begin{scope}[rotate=0]
        \path[P] (-1,-4) \Vset ;

    \end{scope}

     \begin{scope}[rotate=0]
        \path[P] (-1,0) \Pset ;

        \end{scope}

\node[fill=white] (q) at (-1.7,0.5) {${V}$};
\node[fill=white] (p) at (-3.3,0.9) {${U}$};
\node[fill=white] (p) at (1.5,-3) {${P}$};
\node[fill=white] (p) at (-0.3,-3.5) {${Q}$};
\draw[red] (-1,2) -- (-1,-6);

\end{tikzpicture}
\hspace{1.5cm}
                \begin{tikzpicture}[scale=0.5]
\tikzset{
    >=stealth'}

\draw[fill=black,color=black] (-3,0) node[below right] {${t}$} circle (2pt);
\draw[fill=black,color=black] (1,0) node[below right] {${s}$} circle (2pt);
\draw[fill=black,color=black] (-3,-4) node[below right] {${u}$} circle (2pt);
\draw[fill=black,color=black] (1,-4) node[below right] {${v}$} circle (2pt);

    \begin{scope}[rotate=0]
        \path[P] (0,0) \Qset ;

    \end{scope}

        \begin{scope}[rotate=0]
        \path[P] (0,-4) \Uset ;

    \end{scope}
            \begin{scope}[rotate=0]
        \path[P] (-1,-4) \Vset ;

    \end{scope}

     \begin{scope}[rotate=0]
        \path[P] (-1,0) \Pset ;

        \end{scope}

\node[fill=white] (q) at (-1.7,0.5) {${V}$};
\node[fill=white] (p) at (-3.3,0.9) {${U}$};
\node[fill=white] (p) at (1.5,-3) {${P}$};
\node[fill=white] (p) at (-0.3,-3.5) {${Q}$};
\draw[red] (-1,2) -- (-1,-6);
\draw[red] (-4,-2) -- (-1.2,-2);
\draw[red] (-0.8,-2) -- (2,-2);
\end{tikzpicture}
\caption{A problem $\Prob$ (left-hand side) and its refinement $\Prob'$ (right-hand side), see Example~\ref{refinement}}\label{Counterexample}
\end{figure}
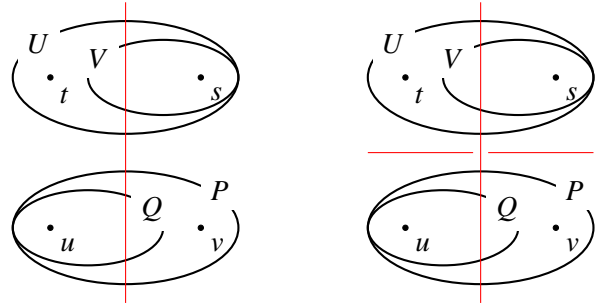

\section{Learning and Problem-Solving}\label{learners_and_conjectures}

\begin{definition}
Let $\ES=(S,\Obs)$ be an epistemic space and let $\sigma_0,\ldots,\sigma_n\in \Obs$. An \emph{agent} (also called a ``learner", or a ``learning method") is a map $\mathcal{L}$ that associates to any
epistemic space $\ES$ and any data sequence $(\sigma_0,\ldots,\sigma_n)$ some family $\mathcal{L}_{\ES} (\sigma_0, \ldots, \sigma_n)\subseteq {\mathcal P} (S)$ of subsets of $S$, satisfying a ``consistency" condition: $\emptyset\not \in  \mathcal{L}_{\ES} (\sigma_0, \ldots, \sigma_n)$ whenever $\bigcap_{i=o}^n \sigma_i\not=\emptyset$.
\end{definition}

Intuitively, after observing the data sequence $\vec{\sigma}=(\sigma_0,\ldots,\sigma_n)$, we can say that agent ${\mathcal L}$ \emph{believes a proposition $P$ after observing the data sequence} $\vec{\sigma}=(\sigma_0,\ldots,\sigma_n)$, and write $B_{\mathcal L}^{\vec{\sigma}} P$ iff
$P\in \mathcal{L}_{\ES} (\sigma_0, \ldots, \sigma_n)$. We can also interpret this as a \emph{conditional belief}, rather than as revised belief, the agent believes every $P\in \mathcal{L}_{\ES} (\sigma_0, \ldots, \sigma_n)$ conditional on $\sigma_0, \ldots, \sigma_n$. However, in the end we are of course interested in the actual revised beliefs after observing the data, so the assumption in this case is that conditional beliefs guide the agent's revision strategy: they ``pre-encode" future belief revisions, to use a term coined by Johan van Benthem \cite{LogicalDynamics}. The above consistency simply means that each of the agent's beliefs is consistent whenever the observed data are consistent.

A \emph{doxastic agent} is one whose set $\mathcal{L}_{\ES} (\sigma_0, \ldots, \sigma_n)$ of beliefs forms a \emph{(proper) filter} on $S$ when observing consistent data; in other words, her beliefs are (consistent when possible, and also) \emph{inference-closed} (i.e., if $P\subseteq Q$ and $P\in \mathcal{L}_{\ES} (\sigma_0, \ldots, \sigma_n)$, then
$Q\in \mathcal{L}_{\ES} (\sigma_0, \ldots, \sigma_n)$) and \emph{conjunctive} (i.e., if $P, Q\in \mathcal{L}_{\ES} (\sigma_0, \ldots, \sigma_n)$ then  $(P\cap Q)\in \mathcal{L}_{\ES} (\sigma_0, \ldots, \sigma_n)$).
Hence, for any doxastic agent ${\mathcal L}$ and every consistent data sequence $\vec{\sigma}$, the belief operator $B_{\mathcal L}^{\vec{\sigma}}$ (as defined above) satisfy the usual $KD45$ axioms of doxastic logic.

A \emph{standard agent} is a doxastic agent $\mathcal{L}$ whose beliefs form a \emph{principal filter}, i.e., all her beliefs are entailed by one ``strongest belief"; formally, a doxastic agent $\mathcal{L}$ is standard
iff for every data sequence $\vec{\sigma}$ over any epistemic space $\ES$ there exists some set $L_{\ES}(\vec{\sigma})$, such that
$$\mathcal{L}_{\ES} (\vec{\sigma}) =\{P\subseteq S~|~ L_{\ES} (\vec{\sigma})\subseteq P\}.$$
It is easy to see that in this case, we must have
$L_{\ES} (\vec{\sigma}) = \bigcap \mathcal{L}_{\ES} (\vec{\sigma})$. Indeed, we can equivalently define a doxastic agent $\mathcal{L}$  to be standard iff $\bigcap \mathcal{L}_{\ES} (\vec{\sigma})\in \mathcal{L}_{\ES} (\vec{\sigma})$ holds for all data sequences $\vec{\sigma}$. \emph{Standard agents
are} \emph{globally consistent} whenever possible: $\bigcap \mathcal{L}_{\ES} (\sigma_0, \ldots, \sigma_n)\not=\emptyset$ whenever $\bigcap_{i=o}^n \sigma_i\not=\emptyset$.

Traditional learning methods in Formal Learning Theory correspond to our standard agents, and they are typically identified with the map $L$ (given by
$L_{\ES} (\sigma_0, \ldots, \sigma_n): =\bigcap \mathcal{L}_{\ES} (\sigma_0, \ldots, \sigma_n)$).
From now on we follow this tradition, and refer to standard agents using the map $L$. But in general we do \emph{not} restrict ourselves to standard agents.

An \emph{AGM agent} is an agent ${\mathcal L}^{\leq}$ who forms beliefs by \emph{AGM conditioning}, i.e., it comes endowed with a map that associates any
epistemic space $\ES$ some total preorder\footnote{A total preorder on $S$ is a binary relation $\leq$ on $S$ that is reflexive, transitive, and connected (i.e., for all $s, t\in S$, we have either $s\leq t$ or $t\leq s$).} $\leq_{\ES}$ on $S$, called \emph{``prior" plausibility relation}; and whose beliefs after observing any data sequence $\vec{\sigma}=(\sigma_0, \ldots, \sigma_n)$ are given by
$${\mathcal L}_{\ES}^{\leq} (\vec{\sigma}):=\{P\subseteq S~|~ \exists s\in \bigcap_{i=0}^n \sigma_i \ \forall t\in \bigcap_{i=0}^n \sigma_i \ (t\leq s\Rightarrow t\in P)\}.$$
Intuitively, $t\leq s$ means that $t$ is \emph{at least as plausible} as $s$ (according to our agent). So, an AGM agent believes $P$
conditional on a data sequence $\vec{\sigma}$ iff $P$ is true in all the states (consistent with the data) that are ``plausible enough".

It is easy to see that \emph{every AGM agent is a doxastic agent}: ${\mathcal L}_{\ES}^{\leq} (\vec{\sigma})$ is a proper filter whenever $\bigcap_{i=0}^n\sigma_i\not=\emptyset$; hence, the beliefs of an AGM agent satisfy the usual $KD45$ axioms of doxastic logic (when learning any consistent data sequence).

Moreover, it is well-known that in fact, \emph{the beliefs of AGM agents satisfy all the so-called AGM axioms from Belief Revision Theory} \cite{AGM}: if, for any
data sequence $\vec{\sigma}=(\sigma_0, \ldots, \sigma_n)$, we set $T={\mathcal L} (\sigma_0, \ldots, \sigma_n)$, and for any new observation $\phi\in {\mathcal O}$ we set $T*\phi = {\mathcal L} (\sigma_0, \ldots, \sigma_n, \phi)$, then the resulting revision operator $*$ satisfies all the AGM postulates. In fact,
for any AGM agent ${\mathcal L}$, if we interpret the operator $B_{\mathcal L}^{\vec{\sigma}}$ (as defined above) as representing a conditional belief $B^{\sigma_0 \wedge\ldots \wedge\sigma_n}$, then the sound and complete logic of these conditional belief operators is the so-called Conditional Doxastic Logic \cite{Boa04,BS08} (which is itself just a repackaging of the AGM postulates in the language of conditional logic).

\begin{observation}
Given a total preorder $\leq$ on $S$ and a subset $A\subseteq S$, set
$$Min_{\leq} (A):= \{s\in A ~|~ s\leq t \mbox{ for all } t\in A\}$$
for the set of $\leq$-minimal states in $A$.
Let $\vec{\sigma}=(\sigma_0, \ldots, \sigma_n)$ be any data sequence such that  $Min_{\leq} (\bigcap_{i=0}^n \sigma_i)\not=\emptyset$.
Then ${\mathcal L}_{\ES}^{\leq} (\vec{\sigma})$ is the principal filter generated by $Min_{\leq} (\bigcap_{i=0}^n \sigma_i)$, i.e., we have
$${\mathcal L}_{\ES}^{\leq} (\sigma_0, \ldots, \sigma_n):= \{P\subseteq S ~|~ Min_{\leq} (\bigcap_{i=0}^n \sigma_i)\subseteq P\}.$$
\end{observation}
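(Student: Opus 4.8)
The plan is to unfold both definitions and check that they describe the same collection of sets. Write $D:=\bigcap_{i=0}^n \sigma_i$ for the data intersection and $M:=Min_{\leq}(D)$ for its set of minimal states; by hypothesis $M\neq\emptyset$, so fix some $m\in M$. I must show that $P\in {\mathcal L}_{\ES}^{\leq}(\vec{\sigma})$ holds exactly when $M\subseteq P$.

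First I would record a small closure fact: for any minimal $m\in M$ and any $t\in D$, we have $t\leq m$ if and only if $t\in M$. Indeed, if $t\in M$ then $t\leq m$ since $m\in D$; conversely, if $t\leq m$ then, because $m\leq u$ for every $u\in D$, transitivity yields $t\leq u$ for every $u\in D$, and as $t\in D$ this means $t\in M$. In short, $\{t\in D \mid t\leq m\}=M$ whenever $m\in M$.

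For the inclusion from right to left, suppose $M\subseteq P$. I would take the witness $s:=m$ in the existential clause defining ${\mathcal L}_{\ES}^{\leq}(\vec{\sigma})$. By the closure fact, every $t\in D$ with $t\leq m$ lies in $M$, hence in $P$; this is exactly the condition $\exists s\in D\,\forall t\in D\,(t\leq s\Rightarrow t\in P)$, so $P\in {\mathcal L}_{\ES}^{\leq}(\vec{\sigma})$. For the converse, suppose $P\in {\mathcal L}_{\ES}^{\leq}(\vec{\sigma})$, witnessed by some $s\in D$. Given any $m\in M$, minimality gives $m\leq s$ (as $s\in D$), so the witnessing condition yields $m\in P$; hence $M\subseteq P$. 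Combining the two inclusions gives the claimed equality.

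The argument is essentially a matter of chasing definitions, so I do not anticipate a genuine obstacle; the only points requiring care are the nonemptiness hypothesis $M\neq\emptyset$ (without which no witness $s$ could be guaranteed, and the resulting filter would not be proper) and the orientation of the plausibility order, where $t\leq s$ means $t$ is \emph{at least as plausible} as $s$, which is precisely what singles out the minimal states as the relevant ones.
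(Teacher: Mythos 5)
Your proof is correct: the closure fact $\{t\in D \mid t\leq m\}=M$ for $m\in M$ is exactly what makes the witness $s:=m$ work in one direction, and minimality of $m$ plus the witnessing condition gives the other direction. The paper states this as an Observation without any proof, treating it as an immediate consequence of the definitions, and your definition-chasing argument is precisely the verification the paper implicitly expects (including the correct identification of where the nonemptiness hypothesis is needed, namely to supply the existential witness).
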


In general though, the filter ${\mathcal L}_{\ES}^{\leq} (\vec{\sigma})$ is not principal. So AGM agents are \emph{not} necessarily standard agents. But there is an important case when they are standard:  whenever the preorder $\leq_{\ES}$ is well-founded in every space $\ES$ (i.e., there are no infinite chains $s_0 > s_1 > s_2 \ldots$ of more and more plausible states). It is easy to see that the map $L$ associated to a standard AGM agent is given by the set of $\leq$-minimal states consistent with the data:
$$L_{\ES}^{\leq} (\sigma_0, \ldots, \sigma_n):= Min_{\leq} (\bigcap_{i=0}^n \sigma_i).$$
Intuitively, this means that a standard AGM agent believes a proposition $P$ iff $P$ is true in all the ``most plausible" states consistent with the data.

The original semantics of AGM belief was given using only standard AGM agents. But this semantics was in fact borrowed by Grove \cite{Gro88} from Lewis' semantics for conditionals \cite{Lewis:1969aa}, which did \emph{not} assume well-foundedness.\footnote{Indeed, Lewis' definition of conditionals has a similar shape to our above definition of (conditional) beliefs for non-standard AGM agents.}

\begin{definition} Let $\ES$ be an epistemic space. An agent ${\mathcal L}$ \emph{verifies a proposition $A\subseteq S$ in the limit} if, for every state $s\in S$ and every data stream $\vec{O}$ for $s$, we have $s\in A$ iff there exists some $k\in\Nat$ such that $A\in {\mathcal L}_\ES (\vec{O}[n])$ for all $n\geq k$.
For standard agents, this means that $L_\ES (\vec{O}[n])\subseteq A$ for all $n\geq k$.
A set $A\subseteq S$ is \emph{verifiable in the limit} if there exists some agent that verifies $A$ in the limit.\footnote{For a discussion of the relationship between verifiability and learnability see, e.g., \cite{Kell96,Gie08}.}

An agent ${\mathcal L}$ \emph{falsifies a proposition $A\subseteq S$ in the limit} if, for every state $s\in S$ and every data stream for $\vec{O}$ for $s$, we have $s\notin A$ iff there exists some $k\in\Nat$ such that $A^c\in {\mathcal L}(\ES, \vec{O}[n])\subseteq A^c$ for all $n\geq k$ (here, as in the rest of this paper, $X^c:=S\setminus X$ stands for the complement of $X$). For a standard agent, this means $L(\ES, \vec{O}[n])\subseteq A^c$ for all $n\geq k$,

A proposition $A\subseteq S$ is \emph{falsifiable in the limit} if there exists some agent that falsifies $A$ in the limit.

A proposition $A\subseteq S$ is \emph{decidable in the limit} if it is both verifiable and falsifiable in the limit.

An agent ${\mathcal L}$ \emph{solves a problem} ${\Prob}=(\ES, {\mathcal Q})$ if, for every state $s\in S$ and every data stream $\vec{O}$ for $s$, there exists some $k\in\Nat$ such that $A_s\in {\mathcal L}_{\ES} (\vec{O}[n])$ for all $n\geq k$ (recall that $A_s$ is true answer to ${\mathcal Q}$ at $s$).
For a standard agent, this means that
$L_{\ES}(\vec{O}[n])\subseteq A_s$ for all $n\geq k$. A problem is \emph{solvable (in the limit)} if there exists some agent that solves it.

An epistemic space $\ES=(S, \Obs)$ is \emph{learnable (by an agent ${\mathcal L}$)} if the (problem given by the) learning question ${\mathcal Q}_S=\{ \{s\}~|~ s\in S\}$ is solvable (by ${\mathcal L}$).

All the above notions have a \emph{standard} counterpart, e.g., $A$ is \emph{standardly verifiable} if there exist some standard agent that verifies it; $\Prob$ is \emph{standardly solvable} if it can be solved by some standard agent, etc.
\end{definition}

Note that standard learnability is essentially the same as Gold's \emph{identifiability in the limit} \cite{Put65,Gold67}.

\par\noindent\textbf{Examples and Counterexamples}:
An example of \emph{non-learnable} space $\ES=(S,\Obs)$ is obtained by taking four abstract states $S=\{s,t,u,w\}$ and two observable properties
$\Obs=\{V, U\}$, with $V=\{s,t,u\}$ and $U=\{t,u,w\}$, as depicted in Figure~\ref{example}. Since states $s$ and $t$ satisfy the same observable properties, no learning method will ever distinguish them.
\begin{figure}[htbp]
\begin{center}
\def\Qset{(1,0) ellipse (3 and 1.5) }
\def\Pset{(-1,0) ellipse (3 and 1.5)}

\tikzstyle{P} = [draw,black,thick]
                \begin{tikzpicture}[scale=0.5]
\tikzset{
    >=stealth'}

\draw[fill=black,color=black] (-3,0) node[below right] {${s}$} circle (2pt);
\draw[fill=black,color=black] (-1,0) node[below right] {${t}$} circle (2pt);
\draw[fill=black,color=black] (1,0) node[below right] {${u}$} circle (2pt);
\draw[fill=black,color=black] (3,0) node[below right] {${w}$} circle (2pt);

    \begin{scope}[rotate=0]
        \path[P] (1,0) \Qset ;

    \end{scope}

     \begin{scope}[rotate=0]
        \path[P] (-1,0) \Pset ;

        \end{scope}

\node[fill=white] (q) at (3.3,0.9) {${U}$};
\node[fill=white] (q) at (-3.3,0.9) {${V}$};

\end{tikzpicture}

\end{center}
\caption{A non-learnable space}\label{example}
\end{figure}
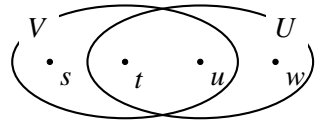

But even spaces in which no two states satisfy the same observations can still be non-learnable, e.g., \emph{all the $n$-dimensional Euclidean spaces from Example~\ref{reals} are} \emph{not learnable} (though, as we will see, many questions are solvable and many subsets are decidable over these spaces).
Another example of \emph{non-learnable} space is given in Figure~\ref{fig_gold_neg}: formally, $\ES=(S,\Obs)$, where
$S:= \{s_n~|~n\in \Nat\}\cup\{s_\infty\}$, and $\Obs=\{O_i~|~i\in\Nat\}$, and for any $i\in\Nat$, $O_i:=\{s_i, s_{i+1},\ldots\}\cup\{s_\infty\}$.

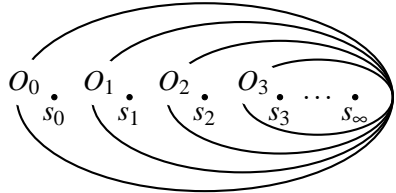
\begin{figure}[h]
\def\Qset{(7,0) ellipse (2 and 1) }
\def\Pset{(6,0) ellipse (3 and 1.5)}
\def\PPset{(5,0) ellipse (4 and 2)}
\def\PPPset{(4,0) ellipse (5 and 2.5)}

\tikzstyle{P} = [draw,black,thick]

\begin{center}
\begin{tikzpicture}[scale=0.5]
\tikzset{
    >=stealth'}

\draw[fill=black,color=black] (0,0) node[below] {${s_0}$} circle (2pt);
\draw[fill=black,color=black] (2,0) node[below] {${s_1}$} circle (2pt);
\draw[fill=black,color=black] (4,0) node[below] {${s_2}$} circle (2pt);
\draw[fill=black,color=black] (6,0) node[below] {${s_3}$} circle (2pt);

\draw[fill=black,color=black] (8,0) node[below] {${s_\infty}$} circle (2pt);

        \path[P] (0,0) \Qset ;
        \path[P] (1,0) \Pset ;
        \path[P] (2,0) \PPset ;
        \path[P] (3,0) \PPPset ;

\node[fill=white, circle, inner sep=1pt] (p1) at (-0.8,0.4) {${O_0}$};
\node[fill=white, circle, inner sep=1pt] (p2) at (1.2,0.4) {${O_1}$};
\node[fill=white, circle, inner sep=1pt] (p3) at (3.2,0.4) {${O_2}$};
\node[fill=white, circle, inner sep=1pt] (p4) at (5.3,0.4) {${O_3}$};
\node[fill=white, circle, inner sep=1pt] (dots) at (7,0) {${\ldots}$};

\end{tikzpicture}

\caption{Another non-learnable space}\label{fig_gold_neg}
\end{center}
\end{figure}
In contrast, an example of \emph{learnable} space is in Figure~\ref{fig_ex_learn_1}:
formally, $S= \{s_{n}~|~ n\in {\omega}\}$ consists of countably many distinct states, with ${\mathcal{O}} =\{
O_n ~|~ n\in {\omega}\}$, where $O_n =\{s_0,s_1,s_2, \ldots, s_n\}$.
\begin{figure}[h]
\def\Qset{(0.02,0) ellipse (1 and 0.5) }
\def\Pset{(1.01,0) ellipse (2 and 1)}
\def\PPset{(2,0) ellipse (3 and 1.5)}
\def\PPPset{(3,0) ellipse (4 and 2)}
\def\PPPPset{(4,0) ellipse (5 and 2.5)}
\tikzstyle{P} = [draw,black,thick]

\begin{center}
\begin{tikzpicture}[scale=0.6]
\tikzset{
    >=stealth'}

\draw[fill=black,color=black] (0,0) node[below] {${s_0}$} circle (2pt);
\draw[fill=black,color=black] (2,0) node[below] {${s_1}$} circle (2pt);
\draw[fill=black,color=black] (4,0) node[below] {${s_2}$} circle (2pt);
\draw[fill=black,color=black] (6,0) node[below] {${s_3}$} circle (2pt);
\draw[fill=black,color=black] (8,0) node[below] {${s_4}$} circle (2pt);

        \path[P] (0,0) \Qset ;
        \path[P] (1,0) \Pset ;
        \path[P] (2,0) \PPset ;
        \path[P] (3,0) \PPPset ;
        \path[P] (4,0) \PPPPset ;

\node[fill=white, circle, inner sep=1pt] (p1) at (0.8,0.4) {${O_0}$};
\node[fill=white, circle, inner sep=1pt] (p2) at (2.8,0.4) {${O_1}$};
\node[fill=white, circle, inner sep=1pt] (p3) at (4.8,0.4) {${O_2}$};
\node[fill=white, circle, inner sep=1pt] (p4) at (6.8,0.4) {${O_3}$};
\node[fill=white, circle, inner sep=1pt] (p5) at (8.8,0.4) {${O_4}$};
\node[fill=white, circle, inner sep=1pt] (dots) at (10.2,0) {${\ldots}$};

\end{tikzpicture}

\caption{A learnable space}\label{fig_ex_learn_1}
\end{center}
\end{figure}
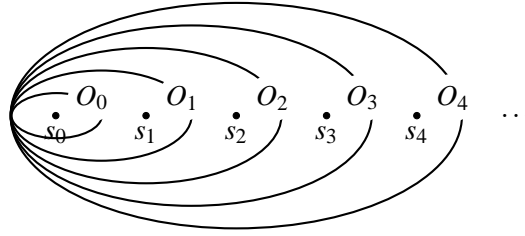
A standard agent that can learn this space in the limit is given by setting $L(\sigma_1, \ldots, \sigma_n)$ to be the \emph{maximum number (in the natural order)} in $\bigcap_{i=0}^n \sigma_i$, whenever there is such a maximum number, and setting $L(\sigma_1, \ldots, \sigma_n):= \bigcap_{i=0}^n \sigma_i$ otherwise.

\begin{proposition}\label{standard}
Let $\ES$ be an epistemic space, $A\subseteq S$ a proposition and $\Prob=(\ES, {\mathcal Q})$ an inductive problem. Then we have the following:
\begin{itemize}
\item $A$ is verifiable (falsifiable, decidable) in the limit iff it is standardly verifiable (falsifiable, decidable) in the limit.
\item $\Prob$ is solvable iff it is standardly solvable.
\item $\ES$ is learnable iff it is standardly learnable.
\end{itemize}
\end{proposition}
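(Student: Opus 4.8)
The two implications in each bullet are of unequal difficulty. Every standard agent is in particular an agent, so the directions ``standardly verifiable $\Rightarrow$ verifiable'', ``standardly solvable $\Rightarrow$ solvable'', and ``standardly learnable $\Rightarrow$ learnable'' are immediate. The plan is therefore to construct, from an arbitrary witnessing agent $\mathcal{L}$, a \emph{standard} agent with the same limiting behaviour. I would also first reduce the first bullet to verifiability alone: falsifying $A$ is by definition the same as verifying $A^c$, for general and for standard agents alike, and decidability is just the conjunction of verifiability and falsifiability. Hence, once ``verifiable $\Leftrightarrow$ standardly verifiable'' is proved for every proposition, the falsifiable case follows by applying it to $A^c$, and the decidable case follows by conjoining the two.

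For verifiability, suppose $\mathcal{L}$ verifies $A$ (the cases $A=\emptyset$ and $A=S$ being trivial, handled by the constant agent $S$). I would define a two-valued standard agent by setting $L_{\ES}(\vec\sigma)=A$ whenever $A\in\mathcal{L}_{\ES}(\vec\sigma)$, and $L_{\ES}(\vec\sigma)=S$ otherwise. This is globally consistent: on consistent data the consistency of $\mathcal{L}$ forbids $\emptyset=A\in\mathcal{L}_{\ES}(\vec\sigma)$, so the output is either the nonempty set $A$ or $S$. For the tracking property, if $s\in A$ then $\mathcal{L}$ eventually believes $A$ at every stage, so $L_{\ES}(\vec O[n])=A\subseteq A$ from some point on; if $s\notin A$ then, by the ``iff'' in the definition of verification, $A\notin\mathcal{L}_{\ES}(\vec O[n])$ for infinitely many $n$, whence $L_{\ES}(\vec O[n])=S\not\subseteq A$ infinitely often (using $A\neq S$). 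Thus $L$ standardly verifies $A$, and the first bullet follows.

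For solvability the easy half of the construction is, perhaps surprisingly, correctness rather than consistency. Given $\mathcal{L}$ solving $\Prob=(\ES,\mathcal{Q})$, the natural candidate is the standardization $L_{\ES}(\vec\sigma):=\bigcap\mathcal{L}_{\ES}(\vec\sigma)$: since $A_s\in\mathcal{L}_{\ES}(\vec O[n])$ for all large $n$, we automatically get $\bigcap\mathcal{L}_{\ES}(\vec O[n])\subseteq A_s$ for all large $n$, so the containment demanded by solving comes for free. The hard part is that this strongest belief can be \emph{empty} at a consistent data sequence: an arbitrary agent may hold mutually inconsistent beliefs, and may even persistently believe, along the true stream, answer-cells $A_j\neq A_s$ disjoint from $A_s$. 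Such an agent cannot be faithfully ``mimicked'' by a single strongest belief, and I expect this to be the main obstacle. To overcome it I would not reproduce $\mathcal{L}$'s cell-beliefs, but first pass to a \emph{doxastic} solver, whose beliefs form a proper filter and are hence conjunctive, so that two disjoint cells can never be believed at once and every believed set meets $A_s$. I would then convert this into a standard agent by a conservative, data-driven locking scheme: at each stage commit to the answer that $\mathcal{L}$ stably believes and that is still consistent with the accumulated data $\bigcap_{i=0}^{n}\sigma_i$, revising only when forced by a new observation, with $\bigcap_{i=0}^{n}\sigma_i$ itself as a nonempty fall-back. Showing that this locks onto $A_s$, rather than onto some persistently believed, forever data-consistent competitor, is exactly where solvability must be invoked: it guarantees that the data eventually separate the true answer from its alternatives, and this separation is what allows the wrong cells to be discarded in the limit. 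I anticipate this convergence argument to be the technical heart of the proof.

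Finally, learnability is the special case of solvability in which $\mathcal{Q}$ is the learning question $\{\{s\}\mid s\in S\}$, so ``$\ES$ learnable $\Leftrightarrow$ standardly learnable'' is an immediate specialization of the solvability equivalence and requires no separate argument.
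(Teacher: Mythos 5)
Your handling of the first and third bullets is correct and essentially the paper's own approach: the paper likewise replaces the given agent $\mathcal{L}$ by a finitely-valued standard agent (its version uses three cases, outputting $A$, $A^c$, or $S$), and likewise obtains learnability by instantiating solvability on the learning question $\{\{s\}\mid s\in S\}$. Your reduction of falsifiability to verifiability of $A^c$ is valid (the two defining conditions are literally the same) and is in fact a small improvement: the paper's three-case agent gives $A$ priority over $A^c$, so it can fail to falsify when fed an adversarial general agent that falsifies $A$ while also spuriously believing $A$ at every stage; your reduction avoids that quirk.

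The genuine gap is in the solvability bullet, which is also where you depart from the paper. The paper's proof is a direct one-step construction: fix an arbitrary well-order $\leq$ on $\mathcal{Q}$ and let $L_{\ES}(\vec{\sigma})$ be the $\leq$-first answer $A$ with $A\in\mathcal{L}_{\ES}(\vec{\sigma})$, with fallback $S$; there is no detour through doxastic agents and no locking scheme. Your plan instead hinges on the step ``first pass to a doxastic solver,'' for which you give no construction whatsoever --- and that step is essentially the whole content of the proposition, since what separates an arbitrary agent from a standard one is precisely the coherence of its belief set. Worse, for literally arbitrary agents no such passage can exist: the agent that believes every nonempty answer at every stage satisfies the consistency requirement and ``solves'' every problem under the letter of the definition, including the learning question on the non-learnable space of Figure~\ref{example}; since that question is not standardly solvable, no construction can convert this solver into a doxastic or standard one. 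So the conversion you postulate can only make sense under a tacit restriction on solvers (e.g., that they never persistently believe two disjoint cells at once, as is automatic for doxastic agents) --- which is, incidentally, the same tacit assumption under which the paper's first-believed-answer construction goes through.

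In addition, the convergence claim on which your ``locking scheme'' rests --- that solvability guarantees ``the data eventually separate the true answer from its alternatives'' --- is false. In the learnable space of Figure~\ref{fig_ex_learn_1}, on the data stream for $s_1$ the accumulated data stabilize at $O_1=\{s_0,s_1\}$, so the false answer $\{s_0\}$ remains consistent with the data forever; no data-driven, consistency-based fallback can ever discard it. What eliminates wrong cells in the limit is the solver's belief behaviour, not data-consistency, and your sketch never explains how the standard agent extracts a single correct cell from a belief family that may contain several data-consistent cells simultaneously. As it stands, your argument for the second bullet is not a proof; the paper's, by contrast, is complete once one grants the implicit assumption above, and needs only one line.
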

\begin{proof}
Let $A\subseteq S$ be a set that is verifiable (falsifiable, decidable) by an agent ${\mathcal L}$ on an epistemic space $\ES$. We construct a standard agent that does the same thing, by setting, for every data sequence $\vec{\sigma}\in {\mathcal O}^*$: $L_\ES (\vec{\sigma}):= A$ if $A\in {\mathcal L} _\ES (\vec{\sigma})$, $L_\ES (\vec{\sigma}):= A^c$ if $A\not\in {\mathcal L} _\ES (\vec{\sigma})$ but $A^c\in {\mathcal L} _\ES (\vec{\sigma})$, and
$L_\ES (\vec{\sigma}):=S$ otherwise. Also, on any \emph{other} space $\ES'=(S', {\mathcal O}')$, we set by default $L_{\ES'} (\vec{\sigma}'):=S'$.

Similarly, let $\Prob=(\ES, {\mathcal Q})$ be a problem that is solvable by ${\mathcal L}$. Let $\leq$ be some arbitrary well-order of the set ${\mathcal Q}$. (Such a well-order exists, by the Well-Ordering Theorem.)
We construct a standard agent who also solves $\Prob$, by setting $L_\ES (\vec{\sigma}):= A$ if $A$ is the first answer in ${\mathcal Q}$ (according to $\leq$) such that $A\in {\mathcal L} _\ES (\vec{\sigma})$ holds; and $L_\ES (\vec{\sigma}):=S$ if no such answer exists. (As before, we can extend our agent to any other space $\ES'=(S', {\mathcal O}'$), by setting $L_{\ES'} (\vec{\sigma}'):=S'$.)

By applying this to the learning problem ${\mathcal Q}= \{\{s\}~|~s\in S\}$, we obtain the similar result for learnability.\end{proof}

In conclusion, everything that can be learned by any agent can also be learned by some standard agent. However, this is no longer true when we restrict to more canonical types of agents (such as AGM agents).
\begin{proposition}\label{non-universal standard}
There exist spaces that are learnable, but not learnable by standard AGM agents. Hence, there exist solvable problems that are not solvable by standard AGM agents.
\end{proposition}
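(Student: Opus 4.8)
The plan is to exhibit one learnable space that defeats \emph{every} standard AGM agent, and I would take it to be the space $\ES=(S,\Obs)$ of Figure~\ref{fig_ex_learn_1}, where $S=\{s_n\mid n\in\Nat\}$ and $O_n=\{s_0,\ldots,s_n\}$. This space is already known to be learnable (by the standard ``maximum'' agent described just after Figure~\ref{fig_ex_learn_1}), so it suffices to show that no standard AGM agent learns it. For the final sentence I would then note that learnability is by definition the solvability of the learning question ${\mathcal Q}_S=\{\{s_n\}\mid n\in\Nat\}$, so the inductive problem $(\ES,{\mathcal Q}_S)$ is solvable but not solvable by any standard AGM agent.

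Assume for contradiction that some standard AGM agent ${\mathcal L}^{\leq}$, with prior plausibility order $\leq\, :=\,\leq_{\ES}$, solves ${\mathcal Q}_S$. Fix $k\in\Nat$ and consider the constant data stream $\vec O=(O_k,O_k,\ldots)$. First I would check that this is a genuine (sound and complete) data stream for $s_k$, using that $O_k\subseteq O_m$ for every observable $O_m$ true at $s_k$ (i.e.\ for every $m\geq k$). Along this stream the accumulated evidence is constantly $\bigcap_{i\leq n}\vec O_i=O_k$, a finite nonempty set; hence $Min_{\leq}(O_k)\neq\emptyset$, and by the Observation ${\mathcal L}^{\leq}_{\ES}(\vec O[n])$ is the principal filter generated by $Min_{\leq}(O_k)$. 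Since the agent solves ${\mathcal Q}_S$, it must eventually believe the true answer, i.e.\ $\{s_k\}\in{\mathcal L}^{\leq}_{\ES}(O_k)$, which by the Observation forces $Min_{\leq}(O_k)\subseteq\{s_k\}$, and therefore $Min_{\leq}(O_k)=\{s_k\}$. In particular $s_k<s_i$ for every $i<k$.

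Letting $k$ range over $\Nat$, these constraints yield an infinite strictly descending chain $s_0>s_1>s_2>\cdots$ in $(S,\leq)$, so $S$ has no $\leq$-minimal element and $Min_{\leq}(S)=\emptyset$. I would then observe that on the trivial evidence (the prior stage, where the accumulated evidence is all of $S$) the belief set ${\mathcal L}^{\leq}_{\ES}$ consists exactly of those $P\subseteq S$ containing some tail $\{s_m\mid m\geq n\}$; its intersection is empty, so this filter is \emph{not} principal, contradicting the fact that standard agents are globally consistent (so $\bigcap{\mathcal L}^{\leq}_{\ES}(\vec\sigma)\neq\emptyset$ whenever the evidence is nonempty, and $S\neq\emptyset$). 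This yields the desired contradiction, and the ``Hence'' clause follows immediately since solving $(\ES,{\mathcal Q}_S)$ is exactly learning $\ES$.

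The delicate point—and the step I would be most careful to justify—is the passage from ``learns the space'' to ``not a standard agent''. Ill-foundedness of $\leq$ is by itself harmless on the \emph{finite} evidence sets $O_k$ (on which $Min_{\leq}$ is always nonempty and the filters are automatically principal); the failure of standardness is localized entirely at the infinite evidence set $S$. I would therefore make sure this stage is covered by the global-consistency requirement on standard agents, e.g.\ by treating it as the agent's prior belief, or—to avoid any reliance on admitting the empty data sequence—by harmlessly adjoining the always-true observation $S$ to $\Obs$, which leaves $\ES$ learnable but makes $S$ an honest member of $\Obs^{\cap}$ on which $Min_{\leq}$ must be nonempty.
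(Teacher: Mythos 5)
Your proof is correct and follows essentially the same route as the paper's: the paper's (very terse) proof uses the same space of Figure~\ref{fig_ex_learn_1} and rests on exactly the fact you establish, namely that any prior that learns this space by conditioning is forced to be the non-wellfounded chain $s_0>s_1>s_2>\cdots$. If anything, your write-up is more careful than the paper's, which merely asserts this fact (citing earlier work) and tacitly identifies ``standard AGM agent'' with ``AGM agent with well-founded prior,'' whereas you explicitly close the gap between non-wellfoundedness of the prior and failure of standardness (which on this space shows up only at the trivial-evidence stage, handled by your empty-sequence or adjoin-$S$ device).
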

\begin{proof} Consider a counterexample from \cite{Gie10,BGS11,Baltag:2014ac}.
Take the epistemic model from Figure~\ref{fig_ex_learn_1}.
This space is learnable, and thus learnable by $AGM$ conditioning, but it is not learnable by standard conditioning.
Indeed, this space is learnable by conditioning \textit{only} with respect to the following \textit{non-wellfounded} prior:
$s_0>s_1> \ldots > s_n > s_{n+1}>\ldots$ \end{proof}

\section{The Observational Topology}\label{top_notions}

In this section, we assume familiarity with the following notions: \emph{topology} $\tau$ (identified with its family of \emph{open} subsets) over a set $S$ of points, \emph{topological space} $(S, \tau)$, \emph{open} sets, \emph{closed} sets, \emph{interior} $Int(X)$ and \emph{closure} $\closure{X}$ of a set $X$, (open) \emph{neighborhood} of a point $s$, \emph{base} of a topology and \emph{local base (of neighborhoods) at a point}. We use letters $U$, $U'$, etc., for open sets in $\tau$, and letters $C$, $C'$, etc., for closed sets.

A space is said to be \emph{second-countable} if its topology has a countable base.
Given a topological space $(S, \tau)$, the \emph{specialization preorder} ${\sqsubseteq}\subseteq S\times S$ is defined in the following way: for any $s,t\in S$, we set $$\ s\sqsubseteq t\,\,\, \mbox{ iff } \,\,\, \forall U\in\tau \, (s\in U \Rightarrow t\in U).$$

\par\noindent\textbf{Separation Principles}.
In this paper we use four key topological separation notions. The first is the well-known separation axiom $T0$, which will be satisfied by all the topologies that arise in our setting. The second is the separation axiom $TD$. This condition (together with countability) will be shown to characterize \emph{learnable} spaces. The next two notions are analogues of $TD$ separation for \emph{questions}. Instead of asking for open sets that separate points (states), these conditions require the existence of open sets that separate \emph{answers} (to the same question). The concept of \emph{locally closed questions} is a first analogue of $TD$, and it will be shown to characterize in some sense \emph{solvable} problems. Finally, the notion of \emph{linearly separated questions} is a stronger analogue of $TD$ for questions, which characterizes a stronger type of solvability, what we will call \emph{direct solvability by (AGM) conditioning}.

\begin{definition}\label{specialization}
A topological space $(S, \tau)$ satisfies \emph{the separation axiom $T0$} if the specialization preorder is actually a partial order, i.e., it is antisymmetric: $s\sqsubseteq t\sqsubseteq s$ implies $s=t$. Equivalently, if $s\not=t$, then there exists some ``separating" open $U$, such that either $s\in U$, $t\not\in U$, or $s\not\in U$, $t\in U$.

The space $(S, \tau)$ satisfies \emph{the separation axiom $TD$} iff for every point $s\in S$, there is an open $U_x\ni x$ such that $y\not \sqsubseteq x$ for all $y\in O_x\setminus\{x\}$. Equivalently: for every $s\in S$ there is an open $U\in\tau$ such that
$\{s\}=U\cap\closure{\{s\}}$.
\end{definition}
Essentially, $T0$ says that every two points $s\not=t$ can be separated (by an open $U$) one way or another (i.e., either $s\in U$, $t\not\in U$, or $s\not\in U$, $t\in U$), while $TD$ essentially says that every point $s$ can be separated (by an open neighborhood) from all the points $t\not=s$ that are inseparable from $s$.\footnote{A point $y$ is ``inseparable" from $x$ if every open neighborhood of $y$ contains $x$, i.e., $y$ and $x$ are in the topological refinement order $y\sqsubseteq x$.}

\begin{definition}
 Given a topological space $(S,\tau)$, a set $A\subseteq S$ is \emph{locally closed} if it is the intersection $A=U\cap C$ of an open set $U$ with a closed set $C$. Equivalently, if it is of the form $A=U\cap \closure{A}$ for some open $U$.

A set is \emph{$\omega$-constructible} if it is a countable union of locally closed sets.

A question ${\mathcal Q}$ (partition of $S$) is \emph{locally closed} if all its answers are locally closed.
A problem ${\Prob}$ is locally closed if its associated question is locally closed.
\end{definition}
Essentially, locally closed questions are partitions with the property that every ``answer" (i.e., partition cell) $A$ can be separated (by an open neighborhood) from all the non-$A$-states that are inseparable from $A$.\footnote{Here, a state $t$ is said to be ``inseparable" from a set $A$ if there is no open neighborhood $U\ni t$ that is disjoint from $A$.
}

\begin{definition}
A question $\mathcal{Q}$  is \emph{linearly separated} if there exists some \emph{total order} $\unlhd$ on the answers in $\mathcal{Q}$, such that
$A \cap \closure{\bigcup_{B\lhd A} B}=\emptyset$. In other words,
\emph{every answer $A$ can be separated (by some open $U_A\supseteq A$) from the union of all the previous answers}: $U_A\cap B=\emptyset$ for all $B\lhd A$.
\end{definition}
Essentially, a linearly separated question is one whose answers can be totally ordered by a ``plausibility'' (or ``simplicity'') order, in such a way that every answer $A$ can be separated (by an open neighborhood $U_A\supseteq A$) from all answers that are ``more plausible'' (or ``simpler'') than $A$.

\begin{definition}
The \emph{observational topology} $\tau_\ES$ associated with an epistemic space $\ES=(S, \Obs)$ is the topology generated by $\Obs$ (i.e., the smallest collection of subsets of $S$, that includes $\Obs\cup\{\emptyset, S\}$ and is closed under finite intersections and arbitrary unions).
\end{definition}
From now on, we will always implicitly consider our epistemic spaces $\ES$ to also be topological spaces $(S, \tau_{\ES})$, endowed with their observational topology $\tau_{\ES}$. Every topological property possessed by the associated topological space will thus be also attributed to the epistemic space.

\begin{observation}
Every epistemic space is $T0$ and second-countable. A (sound and complete) data stream for $s$ is the same as a local neighborhood base at $s$.
\end{observation}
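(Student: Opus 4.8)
The plan is to reduce all three assertions to a single computation of the specialization preorder of the observational topology. First I would establish the key lemma that, for the topology $\tau_\ES$ generated by $\Obs$, the specialization preorder is exactly the observation-inclusion order: $s \sqsubseteq t$ iff $\Obs_s \subseteq \Obs_t$, i.e. every observable property true at $s$ is true at $t$. The direction $\Rightarrow$ is immediate, since each $O \in \Obs_s$ is itself an open set containing $s$, hence must contain $t$. For $\Leftarrow$ I use that $\Obs^{\cap}$ is a base of $\tau_\ES$: any open $U \ni s$ contains a basic open $B = \bigcap_{j=1}^k O_j$ with $s \in B \subseteq U$ and each $O_j \in \Obs_s \subseteq \Obs_t$, whence $t \in B \subseteq U$.

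Given the lemma, second-countability and $T0$ are short. For \emph{second-countability}, the family $\Obs^{\cap}$ (finite intersections of observations, together with $S$ as the empty intersection) is by construction a base of $\tau_\ES$, and it was already observed that $\Obs^{\cap}$ is countable; hence $\tau_\ES$ has a countable base. For \emph{$T0$}, antisymmetry of $\sqsubseteq$ is, by the lemma, equivalent to the statement that $\Obs_s = \Obs_t$ implies $s = t$, i.e. that distinct states are observationally distinguishable; this is the standing convention here, under which a state is identified by the observations it satisfies (equivalently, one passes to the observational quotient). I would note that the space of Figure~\ref{example} fails this, consistent with its role as a pathological, non-learnable example.

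For the equivalence \emph{``a data stream for $s$ is the same as a local neighborhood base at $s$''} I would argue both directions. Forward: given a sound and complete data stream $\vec{O}$ for $s$, consider its decreasing family of prefix-intersections $I_n := \bigcap_{i=0}^n \vec{O}_i$. Soundness (each $\vec{O}_i \in \Obs_s$, seen by taking the witness $n=i$ in the defining condition) gives $s \in I_n$, so every $I_n$ is an open neighborhood of $s$. For the refining property, the base property produces, for any open $U \ni s$, a basic $B = \bigcap_{j=1}^k O_j' \subseteq U$ with $s \in B$ and each $O_j' \in \Obs_s$; completeness then yields indices $n_j$ with $I_{n_j} \subseteq O_j'$, and taking $N = \max_j n_j$ and using that $(I_n)_n$ is decreasing gives $I_N \subseteq B \subseteq U$, so $\{I_n\}_n$ is a local base at $s$. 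Backward: since $\tau_\ES$ is second-countable it is first-countable, so $s$ has a countable local base $\{U_k\}_k$; replacing each $U_k$ by a basic open $\bigcap_j O_j^{(k)} \subseteq U_k$ through $s$ and enumerating all the observations $O_j^{(k)}$ (padding by repetition to an infinite sequence) produces a stream $\vec{O} \in \Obs^{\Nat}$ all of whose terms lie in $\Obs_s$; one then verifies the defining equality $\Obs_s = \{O \in \Obs \mid \exists n\,(I_n \subseteq O)\}$.

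The main obstacle is in the backward direction, together with the $T0$ subtlety already flagged. In the backward direction the delicate point is matching the \emph{prefix}-intersection shape of the completeness condition (which requires an \emph{initial} segment $\bigcap_{i=0}^n \vec{O}_i \subseteq O$) with an arbitrary local base, whose basic refinements are unordered finite intersections: the remedy is to arrange the enumeration so that the observations constituting each needed basic neighborhood all appear within a common prefix, exploiting that prefix-intersections only shrink as $n$ grows. I would also isolate the degenerate case $\Obs_s = \emptyset$ (no observation holds at $s$), in which no data stream exists while $\{S\}$ is still a local base; this is excluded by the standing assumption that the observations cover $S$. As noted, the $T0$ assertion itself must be read relative to the observational identification of states.
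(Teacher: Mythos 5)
The paper states this Observation without any proof at all, so there is no argument of the authors' to compare yours against; what you have written is a correct and essentially complete verification, organized around a sensible unifying lemma (the specialization preorder of $\tau_\ES$ is exactly inclusion of observation sets: $s \sqsubseteq t$ iff $\Obs_s \subseteq \Obs_t$). Moreover, your two caveats are not weaknesses of your proof but genuine imprecisions in the statement itself, and you were right to isolate them. First, the $T0$ claim is literally false for epistemic spaces as the paper defines them: in the non-learnable space of Figure~\ref{example} the states $t$ and $u$ satisfy exactly the same observable properties ($\Obs_t = \Obs_u = \{U,V\}$), so its observational topology is not $T0$; the claim holds only under the tacit hypothesis that distinct states satisfy distinct sets of observations (equivalently, after passing to the observational quotient), a convention the paper never states, so be aware that you are supplying an assumption rather than citing one. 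Second, the phrase ``is the same as'' must indeed be read, as you read it, through the prefix intersections $I_n = \bigcap_{i \leq n} \vec{O}_i$: a sequence $\vec{O} \in \Obs^{\Nat}$ is a data stream for $s$ iff all its terms contain $s$ and $\{I_n\}_n$ is a neighborhood base at $s$, and manufacturing a stream from an arbitrary local base needs both second countability (which you invoke) and $\Obs_s \neq \emptyset$ (your degenerate case, again not excluded by the paper's definition). One small simplification: in your backward direction no careful ``arranging'' of the enumeration is required---any enumeration of the countably many observations involved works, because each basic neighborhood is built from finitely many of them, which therefore all occur within some common prefix, and prefix intersections only shrink.
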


\begin{proposition}\label{omega_countable_union} Every $\omega$-constructible set can be written as a disjoint countable union of locally closed sets.
\end{proposition}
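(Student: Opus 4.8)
The plan is to begin from a presentation $A=\bigcup_{n\in\Nat}L_n$ of the given $\omega$-constructible set as a countable union of locally closed sets, and to disjointify it in the usual way by setting $M_n:=L_n\setminus\bigcup_{i<n}L_i$. The sets $M_n$ are then pairwise disjoint with $\bigcup_n M_n=A$, so the entire difficulty is concentrated in the fact that the $M_n$ need not themselves be locally closed: they are only Boolean combinations of locally closed sets. Thus the real task is to show that each $M_n$ is a \emph{finite disjoint} union of locally closed sets; once that is done, replacing every $M_n$ by its finitely many pieces refines the countable disjoint union $A=\bigsqcup_n M_n$ into the desired countable disjoint union of locally closed sets.

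The combinatorial core I would isolate is the claim that the family $\mathcal{D}$ of all \emph{finite disjoint} unions of locally closed sets is closed under finite intersection and under complementation; equivalently, that $\mathcal{D}$ is exactly the Boolean algebra generated by the open sets. Granting this, each $M_n=L_n\cap\big(\bigcup_{i<n}L_i\big)^c$ lies in $\mathcal{D}$: the finite union $\bigcup_{i<n}L_i$ is in $\mathcal{D}$ (by closure under finite union), its complement is in $\mathcal{D}$ (by closure under complementation), and intersecting with the locally closed set $L_n$ keeps us inside $\mathcal{D}$ (by closure under intersection). This is precisely what the reduction in the first paragraph requires.

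To prove the claim I would establish two elementary facts. First, the intersection of two locally closed sets is locally closed: writing $A=U\cap C$ and $A'=U'\cap C'$ gives $A\cap A'=(U\cap U')\cap(C\cap C')$, an open set intersected with a closed set. Closure of $\mathcal{D}$ under intersection then follows by distributing: if $D=\bigsqcup_i K_i$ and $D'=\bigsqcup_j K'_j$, then $D\cap D'=\bigsqcup_{i,j}(K_i\cap K'_j)$ is again a finite \emph{disjoint} union of locally closed sets. Second---and this is the step I expect to be the crux---the complement of a single locally closed set must be expressed as a member of $\mathcal{D}$, i.e.\ as a disjoint (not merely arbitrary) union of locally closed sets. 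For $K=U\cap C$ the right decomposition is $K^c=C^c\sqcup(U^c\cap C)$, where $C^c$ is open and $U^c\cap C$ is an intersection of two closed sets, so both summands are locally closed; they are disjoint because one is contained in $C^c$ and the other in $C$. Closure of $\mathcal{D}$ under complementation then follows by De Morgan, writing the complement of $\bigsqcup_i K_i$ as $\bigcap_i K_i^c$ and invoking the already-established closure under intersection.

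Finally, closure of $\mathcal{D}$ under finite union comes for free via $D\cup D'=(D^c\cap D'^c)^c$, which completes the proof of the claim and hence, by the assembly in the first two paragraphs, of the proposition. The only genuinely delicate point is obtaining the complement of a locally closed set as a \emph{disjoint} union of locally closed sets rather than an unstructured one; the explicit splitting $K^c=C^c\sqcup(U^c\cap C)$ handles this cleanly, and everything else reduces to routine bookkeeping with the finite distributive laws.
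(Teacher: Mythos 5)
Your proof is correct and follows essentially the same route as the paper: the same disjointification $M_n = L_n\setminus\bigcup_{i<n}L_i$, followed by the observation that these differences are constructible (finite disjoint unions of locally closed sets) because constructible sets form a Boolean algebra. The only difference is that the paper cites this Boolean-algebra fact as standard, whereas you prove it explicitly (including the key splitting $K^c = C^c \sqcup (U^c\cap C)$), which makes your write-up self-contained but not a different argument.
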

\begin{proof}
In order to prove this, we first recall some standard topological notions and results: A set is called \emph{constructible} if it is a finite disjoint union of locally closed sets. Obviously, all locally closed sets are constructible. It is known that constructible sets form a Boolean algebra, i.e., the family of constructible sets is closed under complementation, finite unions, and finite intersections.

Suppose $A=\bigcup_{i\in \Nat} A_i$, where all $A_i$ are locally closed. Then we can rewrite $A$ as a disjoint union $A=\bigcup_{i\in \Nat} B_i$, where we have set $B_i=A_i \setminus (\bigcup_{k < i} A_k)= A_i \cap \bigcap_{k<i} A_k^c$, for every $i$. Since $B_i$'s are generated from locally closed sets using complementation and finite intersections, they must be constructible. Hence, each $B_i$ can be written as disjoint finite unions of locally closed sets $B_i =\bigcup_{1\leq j\leq i} B_{ij}$. Hence, we can write $A=\bigcup_{i\in \omega}\bigcup_{1\leq j\leq i} B_{ij}$ as a disjoint countable union of locally closed sets.\end{proof}

\begin{definition}
A \emph{pseudo-stratification}  is a finite or $\omega$-long sequence of locally closed sets $\langle A_i~|~i<\lambda\rangle$ (where $\lambda\in\omega\cup\{\omega\}$), which form a partition of $S$ satisfying the following condition:
$$\mbox{ if $j<i$ then either $A_i\cap \closure{A_j} =\emptyset$ or $A_i\subseteq \closure{A_j}$.}$$
\end{definition}

\begin{proposition}\label{countable_part_into_stratified_part}
Every countable locally closed question can be refined to a pseudo-stratification.
\end{proposition}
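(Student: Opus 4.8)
The plan is to separate the combinatorial bookkeeping from the topological content by first reducing the statement to a single clean separation property. Call a partition $\mathcal{A}$ of $S$ into locally closed sets \emph{closure-separated} if for all $A,A'\in\mathcal{A}$ we have either $A\subseteq\overline{A'}$ or $A\cap\overline{A'}=\emptyset$. I claim it suffices to refine $\mathcal{Q}$ to a \emph{countable} closure-separated partition $\mathcal{A}$. Indeed, once $\mathcal{A}$ is countable I may enumerate it as $\langle A_i\mid i<\lambda\rangle$ with $\lambda\in\omega\cup\{\omega\}$ in a completely arbitrary order, and the defining clause of a pseudo-stratification (``$j<i$ implies $A_i\cap\overline{A_j}=\emptyset$ or $A_i\subseteq\overline{A_j}$'') is then nothing but the instance of closure-separation applied to the pair $(A_i,A_j)$, so it holds automatically. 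Moreover $\mathcal{A}$ refines $\mathcal{Q}$ as soon as every member of $\mathcal{A}$ lies inside a single answer, which I will guarantee by only ever cutting \emph{within} answers. Thus the entire content of the proposition is the existence of a countable, locally closed, closure-separated refinement.

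To build such a refinement I would try to cut each answer by the closures $\overline{Q_n}$ of the (countably many) answers, organised by an iterated ``frontier peeling''. Concretely, set $D_0=S$; given the closed set $D_\alpha$, let $L_\alpha\subseteq D_\alpha$ be the set of points possessing an open neighbourhood $U$ with $U\cap D_\alpha$ contained in one answer, put $D_{\alpha+1}=D_\alpha\setminus L_\alpha$, and take intersections at limit stages. Each nonempty $L_\alpha\cap Q_n$ is relatively open in the closed set $D_\alpha$, hence locally closed in $S$, and every member is contained in one answer. Because $\ES$ is second-countable, and therefore hereditarily Lindel\"of, a strictly decreasing transfinite chain of closed sets $D_\alpha$ must stabilize at some \emph{countable} ordinal $\gamma$; this is exactly the place where second-countability is indispensable, and it is what keeps the collection of candidate strata $\{L_\alpha\cap Q_n\}$ countable.

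The main obstacle is twofold, and both parts are genuinely topological rather than bookkeeping. First, \emph{exhaustion}: I must rule out a nonempty ``kernel'' $D_\gamma$ on which no point is locally inside a single answer. Here local-closedness should be combined with a Baire/Lindel\"of argument applied to the countable closed cover $\{\overline{Q_n\cap D_\gamma}\mid n<\lambda\}$ of $D_\gamma$: one trace must have nonempty relative interior, which (using $Q_n\cap D_\gamma=V_n\cap\overline{Q_n\cap D_\gamma}$ for some relatively open $V_n$) exposes a relatively open set sitting inside one answer, contradicting the definition of the kernel. The subtle case is when $D_\gamma$ is meagre in itself, and excluding it is precisely where local-closedness must be leveraged. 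Second, and harder, the peeled strata are not closure-separated for free: a stratum exposed at a deep stage may straddle the closure of a generic one. I expect to have to interleave the peeling with a further cut of each newly exposed layer along the finitely many relevant closures already produced, using the Boolean-algebra structure of constructible sets to keep each piece a finite disjoint union of locally closed sets and appealing to Proposition~\ref{omega_countable_union} to reassemble the countably many resulting pieces into genuine locally closed strata.

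Consequently, I regard the crux as showing that only \emph{finitely many} cuts are ever effective on any single piece, so that the pieces remain locally closed instead of degenerating into infinite Boolean combinations (which are typically no longer locally closed). In effect the argument must bridge the gap between local-closedness, which is a finitary notion closed only under \emph{finite} intersections, and the countably infinite family of closures $\{\overline{Q_n}\}$ one is forced to separate against; the two tools for this bridge are hereditary Lindel\"ofness (to bound the global size and the length of the recursion) and the passage from constructible to $\omega$-constructible sets furnished by Proposition~\ref{omega_countable_union} (to legalise the countable reassembly). Once closure-separation of the countable locally closed refinement is secured, the reduction in the first paragraph finishes the proof with an arbitrary enumeration.
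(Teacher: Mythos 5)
There is a genuine gap, and it is located exactly where you placed your ``crux.'' Your reduction replaces the statement by a strictly stronger, \emph{symmetric} one: closure-separation for \emph{all} pairs $(A,A')$. That reduction is sound in the direction you use it, but it throws away the one feature that makes the proposition provable by elementary means: a pseudo-stratification only requires each piece to be decided against the closures of the pieces that come \emph{earlier} in the enumeration. The paper's proof exploits precisely this asymmetry. It processes the answers $A_0,A_1,A_2,\dots$ one at a time and cuts $A_{n+1}$ by the closures of the \emph{finitely many} pieces already produced at stages $\le n$: for each $f\colon\bigcup_{i\le n}\Pi_i\to\{0,1\}$ it forms $B_f=A_{n+1}\cap\bigcap\{\overline{E}\mid f(E)=0\}\cap\bigcap\{\overline{E}^{\,c}\mid f(E)=1\}$, which is automatically locally closed because only finitely many cuts are ever applied; pieces created at the same stage are then ordered lexicographically, which settles the within-stage comparisons, and pieces are never required to be decided against closures of \emph{later} pieces at all. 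So the ``only finitely many cuts are effective'' problem that you leave open is not solved in the paper --- it is \emph{dissolved} by choosing the enumeration so that it never arises. Your symmetric target, by contrast, forces every piece to be decided against countably many closures, which is exactly the infinite Boolean combination you correctly identify as illegitimate, and you offer no mechanism to avoid it.

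Moreover, the construction you propose for the reduced statement fails outright. Take $S=\mathbb{Q}$ with observables $\{(a,b)\cap\mathbb{Q}\mid a,b\in\mathbb{Q}\}$ and the learning question $\mathcal{Q}=\{\{q\}\mid q\in\mathbb{Q}\}$: every answer is closed, hence locally closed, and the question is countable, so it falls under the proposition (indeed \emph{any} enumeration of it is already a pseudo-stratification, since distinct closed singletons have disjoint closures). But your frontier peeling makes no progress on it: since $\mathbb{Q}$ is dense-in-itself, no point has a neighbourhood $U$ with $U\cap D_0$ inside a single answer, so $L_0=\emptyset$ and $D_\alpha=\mathbb{Q}$ for all $\alpha$; the kernel is nonempty and exhaustion is false, not merely ``subtle.'' The Baire-category repair you sketch is unavailable for the same reason: observational topologies are just second-countable $T_0$ topologies, and $\mathbb{Q}$ is meagre in itself, so the covering argument by the closed sets $\overline{Q_n\cap D_\gamma}$ cannot produce a trace with nonempty relative interior. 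Since the theorem is true in this example while your method stalls, the fault lies with the method: both halves of your plan (the symmetric reduction and the peeling) would need to be abandoned rather than patched, in favour of a stage-by-stage construction of the paper's kind.
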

\begin{proof} Suppose $\Pi= \{A_i~|~ i\in\Nat\}$ is a countable locally closed question (partition of $S$). We first show the following:

\smallskip

\par\noindent\textbf{Claim.} There exists a family $\{ (\Pi_i, <_i)~|~ i\in \Nat\}$, satisfying
\begin{enumerate}
\item each $\Pi_i$ is a finite partition of $A_i$ into locally closed sets;
\item each $<_i$ is a total order on $\Pi_i$;
\item if $j<i$, $E\in \Pi_j$, $B\in \Pi_i$, then either $B\subseteq \closure{E}$ or $B\subseteq \closure{E}^c$;
\item if $B, E\in \Pi_i$, $E <_i B$, then $B\subseteq \closure{E}^c$.
\end{enumerate}
\emph{Proof of Claim}: We construct $(\Pi_n, <_n)$ by recursion: for $n=0$, set $\Pi_0:=\{A_0\}$, with $<_0$ trivial. For the step $n+1$: assume given $\{(\Pi_i, {<}_i)~|~ i\leq n\}$ satisfying the above four conditions (for $i\leq n$). We set
$$\Pi_{n+1}:=\{ B_f~|~f: \bigcup_{i=1}^n \Pi_i \to \{0,1\}\},$$
where for each function $f: \bigcup_{i=1}^n \Pi_i \to \{0,1\}$ we have set
$$B_f:= A_{n+1} \cap \bigcap \{ \closure{E} ~|~ E\in f^{-1}(0)\}\cap \bigcap \{ \closure{E} ^c ~|~ E\in f^{-1}(1)\}.$$
It is obvious that the $B_f$'s are locally closed (given that $A_{n+1}$ is locally closed) and that they form a partition of $A_{n+1}$. So condition $(1)$ is satisfied.

It is also easy to check condition $(2)$ for $i=n+1$: let $j< n+1$, $E\in \Pi_j$ and $B_f\in \Pi_{n+1}$. Then we have either $f(E)=0$, in which case $B_f\subseteq \closure{E}$ (by construction of $B_f$), or else $f(E)=1$, in which case $B_f\subseteq\closure{E}^c$.

To construct the order $<_{n+1}$, observe first that there is a natural total order $<^{(n)}$ on the disjoint union $\bigcup_{i=1}^n \Pi_i$, namely the one obtained by concatenating the orders $<_0$, $<_1$, $\ldots$ , $<_n$. More precisely, if, for every $B\in \bigcup_{i=1}^n \Pi_i$, we set $i(B)$ to be the unique index $i\leq n$ such that $B\in \Pi_i$, then the order $<^{(n)}$ is given by setting:
$B <^{(n)} E$ iff either $i (B)< i(E)$, or else $i(B)=i(E)$ and $B <_{i(B)} E$.

Now, the order $<_{n+1}$ on $B_f$'s is given by the lexicographic order induced by $<^{(n)}$ on the functions $f$ (thought as ``words" written with the letters $0$ and $1$). More precisely, we set:
$$B_f <_{n+1} B_g$$
iff there exists some set $ E\in \bigcup_{i=1}^n \Pi_i$ such that
$$\left( \forall E'<^{(n)}E \, f(E')=g(E') , \mbox{ but } f(E) < g(E)\right),$$
where $<$ is the usual order $0<1$ on $\{0,1\}$.
Clearly, $<_{n+1}$ is a total order on $\Pi_{n+1}$, so condition $(2)$ is satisfied.

Finally, we check condition $(4)$ for $n+1$, let $B_f, B_g\in \Pi_{n+1}$ such that $B_f <_{n+1} B_g$. By definition of the order $<_{n+1}$, this means that there exists some $E\in \bigcup_{i=1}^n \Pi_i$ such that for all $E'<^{(n)} E$ we have $f(E')=g(E')$ but $f(E)< g(E)$, i.e., $f(E)=0$ and $g(E)=1$. By the construction of $B_f$'s, $f(E)=0$ implies that $B_f\subseteq \closure{E}$, from which we get $\closure{B_f}\subseteq \closure{E}$, and thus $\closure{E}^c\subseteq \closure{B_f}^c$. Similarly, $g(E)=1$ implies that $B_g\subseteq \closure{E}^c$. So we have $B_g\subseteq \closure{E}^c \subseteq \closure{B_f}^c$, and thus by transitivity of inclusion we get $B_g \subseteq \closure{B_f}^c$. This completes the proof of our Claim.

\smallskip

Given now the above Claim, we can prove our Lemma by taking as our refined partition
$$\Pi':= \bigcup_{i\in\omega} \Pi_i.$$
Clearly, $\Pi'$ is a refinement of $\Pi$ consisting of locally closed sets. We now define a well-order $<'$ on $\Pi'$ as the concatenation of all the $\leq_i$'s.\footnote{Once again, one can specify this more precisely by first defining $i:\Pi'\to \omega$ by choosing $i(B)$ to be the unique index $i$ such that $B\in \Pi_i$, and finally defining: $B <' E$ iff either $i (B)< i(E)$, or else $i(B)=i(E)$ and $B <_{i(B)} E$.}
Obviously, $<'$ is a total order of type $\leq \omega$ on $\Pi'$, so we get finite or $\omega$-long sequence that enumerates $\Pi'$.
The above properties $(3)$ and $(4)$ ensure that this is a pseudo-stratification.\end{proof}

\begin{lemma} \label{pseudo-strat_map} Given a pseudo-stratification $\langle A_i~|~i<\lambda\rangle$ (of length $\lambda\leq \omega$), there exists a $\lambda$-long sequence of open sets $\langle U_i~|~i <\lambda\rangle$, satisfying:
\begin{enumerate}
\item $U_i \cap \closure{A_i}=A_i$;
\item if $j< i$ and $U_i\cap A_j\not=\emptyset$, then $A_i\subseteq \closure{A_j}$.
\end{enumerate}
\end{lemma}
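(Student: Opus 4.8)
The plan is to define each $U_i$ \emph{directly} (no recursion is needed) by starting from the open witness to local closedness of $A_i$ and then shrinking it so as to avoid exactly those earlier answers that are incompatible with $A_i$. First I would invoke the hypothesis that $A_i$ is locally closed to fix an open set $U_i^0$ with $A_i=U_i^0\cap\closure{A_i}$; this already forces $A_i\subseteq U_i^0$ and is the natural first candidate for condition $(1)$. The difficulty is that $U_i^0$ may well meet some $A_j$ with $j<i$ for which $A_i\not\subseteq\closure{A_j}$, which would violate condition $(2)$.

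The key observation is that the pseudo-stratification property makes this obstruction harmless. For $j<i$ the defining dichotomy gives either $A_i\cap\closure{A_j}=\emptyset$ or $A_i\subseteq\closure{A_j}$, so the contrapositive of $(2)$ only compels us to cut out those $j$ with $A_i\cap\closure{A_j}=\emptyset$. Collect these indices into $J_i:=\{j<i~|~A_i\cap\closure{A_j}=\emptyset\}$ and set
$$U_i:=U_i^0\cap\bigcap_{j\in J_i}(\closure{A_j})^c.$$
Since $\lambda\leq\omega$, every index $i$ is a finite ordinal, so $J_i$ is finite and $U_i$ is a finite intersection of open sets, hence open.

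It then remains to verify the two conditions. For $(1)$, intersecting with $\closure{A_i}$ yields $U_i\cap\closure{A_i}=A_i\cap\bigcap_{j\in J_i}(\closure{A_j})^c$; but $j\in J_i$ means $A_i\cap\closure{A_j}=\emptyset$, i.e.\ $A_i\subseteq(\closure{A_j})^c$, so each factor leaves $A_i$ untouched and the intersection is exactly $A_i$. For $(2)$, I argue by contraposition: if $j<i$ and $A_i\not\subseteq\closure{A_j}$, the dichotomy forces $A_i\cap\closure{A_j}=\emptyset$, hence $j\in J_i$ and $U_i\subseteq(\closure{A_j})^c$; since $A_j\subseteq\closure{A_j}$ we conclude $U_i\cap A_j=\emptyset$. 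Thus whenever $U_i\cap A_j\neq\emptyset$ we must have $A_i\subseteq\closure{A_j}$, which is precisely condition $(2)$.

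The argument is essentially routine once the right $U_i$ is written down; the only genuine content is the observation above, namely that the pseudo-stratification dichotomy guarantees that exactly the answers we are forced to remove (to secure $(2)$) are those whose closures miss $A_i$ entirely (so that removing them cannot spoil $(1)$). The one point requiring care is the finiteness of $J_i$, which is what keeps $U_i$ open and which relies on $\lambda\leq\omega$; for a longer stratification the same recipe would produce a set that need not be open.
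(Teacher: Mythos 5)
Your proposal is correct and is essentially the paper's own proof: the paper defines $U_i := U^{A_i} \cap \bigcap\{\closure{A_j}^c \mid j<i,\ A_i\subseteq\closure{A_j}^c\}$, which is exactly your $U_i^0\cap\bigcap_{j\in J_i}(\closure{A_j})^c$ since $A_i\cap\closure{A_j}=\emptyset$ iff $A_i\subseteq\closure{A_j}^c$, and verifies (1) and (2) by the same computations (the paper argues (2) by contradiction rather than contraposition, which is the same argument). Your explicit remark that $J_i$ is finite (because $i<\lambda\leq\omega$) and hence $U_i$ is open is a point the paper leaves implicit, but it is a clarification rather than a different route.
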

\begin{proof} We know that each $A_i$ is locally closed, so there exists some open set $U^{A_i} \in\tau$ such that $U^{A_i} \cap \closure{A_i}=A_i$.
Now, for all $i\in\omega$ set
$$U_i:= U^{A_i} \cap \bigcap\{ \closure{A_j}^c ~|~ j < i, A_i \subseteq \closure{A_j}^c \}.$$
Let us first check that the sequence $\langle U_i~|~i < \lambda\rangle$ satisfies condition $(1)$:
$$U_i\cap \closure{A_i}=  (U^{A_i} \cap \bigcap\{ \closure{A_j}^c ~|~ j < i, A_i \subseteq \closure{A_j}^c \} )\cap  \closure{A_i}$$
$$= (U_i \cap \closure{A_i})\cap \bigcap\{ \closure{A_j}^c ~|~ j < i, A_i \subseteq \closure{A_j}^c \}$$
$$=A_i \cap \bigcap\{ \closure{A_j}^c ~|~ j < i, A_i \subseteq \closure{A_j}^c \}=A_i$$
Second, let us check condition $(2)$: Suppose that we have $j<i$ and $U_i\cap A_j\not=\emptyset$, but $A_i\not\subseteq \closure{A_j}$. Since $(A_i)_{i<\lambda}$ is a pseudo-stratified sequence, from $j<i$ and $A_i\not\subseteq \closure{A_j}$ we can derive $A_i\subseteq \closure{A_j}^c$. By the construction of $U_i$, this implies that $U_i\subseteq \closure{A_j}^c$, and hence that $U_i\cap A_j\subseteq \closure{A_j}^c\cap A_j\subseteq  \closure{A_j}^c\cap \closure{A_j}=\emptyset$, which contradicts the assumption that $U_i\cap A_j\not=\emptyset$.\end{proof}

\begin{lemma}
\label{pseudo-strat_linear} Every pseudo-stratification is linearly separated.  \end{lemma}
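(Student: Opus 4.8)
The plan is to use the open sets furnished by Lemma~\ref{pseudo-strat_map} and to find a total order on the cells for which each of those open sets does all the separating at once. Write $\langle A_i\mid i<\lambda\rangle$ for the given pseudo-stratification and let $\langle U_i\mid i<\lambda\rangle$ be the opens from Lemma~\ref{pseudo-strat_map}; property $(1)$ gives $A_i\subseteq U_i$. The key observation is that if $\unlhd$ is \emph{any} total order on the cells such that for every $i$ the set $U_i$ is disjoint from every $\lhd$-earlier cell, then the pseudo-stratification is linearly separated: $U_i$ is then an open neighbourhood of $A_i$ with $U_i\cap\bigcup_{A_j\lhd A_i}A_j=\emptyset$, so $\bigcup_{A_j\lhd A_i}A_j\subseteq U_i^{c}$, whence $A_i\cap\closure{\bigcup_{A_j\lhd A_i}A_j}\subseteq U_i\cap U_i^{c}=\emptyset$. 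Crucially, a \emph{single} open set separates $A_i$ from the whole (possibly infinite) union of its predecessors, so no hypothesis on the order type of $\unlhd$ is needed.

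I would therefore reduce the lemma to producing such an order. Define a relation $Q$ on the cells by $A_i\,Q\,A_k$ iff $i\neq k$ and $U_i\cap A_k\neq\emptyset$, and seek a total order $\unlhd$ extending $Q$ (i.e.\ $A_i\,Q\,A_k\Rightarrow A_i\lhd A_k$). For any such order the $\lhd$-predecessors of $A_i$ are exactly the cells $A_k$ with $\lnot(A_i\,Q\,A_k)$, that is, with $U_i\cap A_k=\emptyset$, which is precisely the disjointness required above. Since the set of cells is countable, such a linear extension exists as soon as $Q$ is acyclic (extend the strict partial order $Q^{+}$ by Szpilrajn's theorem). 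Thus the whole lemma reduces to the acyclicity of $Q$, which I expect to be the main obstacle.

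To prove $Q$ acyclic I would exhibit a monovariant that strictly decreases along every $Q$-edge. For each cell $C$ put $D(C):=\{\,j<\lambda\mid C\subseteq\closure{A_j}\,\}$ and compare cells by the lexicographic order on the characteristic functions $\chi_{D(C)}\in\{0,1\}^{\lambda}$, reading index $0$ as most significant. Suppose $A_i\,Q\,A_k$, witnessed by $x\in U_i\cap A_k$. Two facts about $D$ would be established. First, by property $(1)$ we have $U_i\cap\closure{A_i}=A_i$; if $A_k\subseteq\closure{A_i}$ held, then $x\in U_i\cap\closure{A_i}=A_i$, contradicting $x\in A_k$ and $A_i\cap A_k=\emptyset$; hence $i\in D(A_i)\setminus D(A_k)$. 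Second, for $j<i$ I claim $A_k\subseteq\closure{A_j}\Rightarrow A_i\subseteq\closure{A_j}$: if $A_k\subseteq\closure{A_j}$ then $x\in\closure{A_j}$, so (as $x\in U_i$) the set $\closure{A_j}^{c}$ is not one of the factors intersected in the explicit definition of $U_i$; for $j<i$ that factor is dropped only when $A_i\cap\closure{A_j}\neq\emptyset$, in which case the pseudo-stratification condition forces $A_i\subseteq\closure{A_j}$. Consequently $D(A_k)\cap[0,i)\subseteq D(A_i)\cap[0,i)$.

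Putting these together, the least index at which $\chi_{D(A_i)}$ and $\chi_{D(A_k)}$ differ is at most $i$, and there $A_i$ carries a $1$ while $A_k$ carries a $0$ (for indices below $i$ by the displayed inclusion, and at $i$ itself by $i\in D(A_i)\setminus D(A_k)$). Hence $\chi_{D(A_i)}>_{\mathrm{lex}}\chi_{D(A_k)}$, so every $Q$-edge strictly lowers this value. Since the lexicographic order is a strict (irreflexive, transitive) linear order, a $Q$-cycle would yield $\chi_{D(A_1)}>_{\mathrm{lex}}\chi_{D(A_1)}$, which is absurd; therefore $Q$ is acyclic, and the construction of the first two paragraphs completes the proof. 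The point I expect to be delicate is exactly this acyclicity: the lexicographic order is \emph{not} well-founded, so it cannot rule out infinite descending chains---but it does rule out cycles, which is all that is needed, and it is precisely the intersection of $U_i$ with the sets $\closure{A_j}^{c}$ (for $j<i$ with $A_i\cap\closure{A_j}=\emptyset$) that secures the second fact about $D$ and hence the strict decrease.
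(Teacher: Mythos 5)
Your proof is correct, and it shares the paper's overall skeleton: take the open sets $U_i$ supplied by Lemma~\ref{pseudo-strat_map}, reduce linear separation to finding a total order on the cells extending the relation ``$U_i$ meets $A_k$'', and close with the order extension principle. Where you genuinely diverge is in the proof of the key combinatorial fact, acyclicity. The paper argues by contradiction from a \emph{minimal-length} cycle in the relation $i\,R\,j \Leftrightarrow U_i\cap A_j\neq\emptyset$: a $2$-cycle is excluded directly using properties (1) and (2) of Lemma~\ref{pseudo-strat_map}, and any longer cycle can be shortened by cutting out the vertex just before its least index, contradicting minimality. You instead build a potential function $C\mapsto \chi_{D(C)}$ with $D(C)=\{\,j<\lambda \mid C\subseteq\closure{A_j}\,\}$, valued in $\{0,1\}^{\lambda}$ ordered lexicographically, and show that every $Q$-edge strictly decreases it; acyclicity is then immediate, since a strict order admits no cycles (in fact no closed walks). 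Your route trades the paper's case analysis and cycle-surgery for one uniform computation, and it yields slightly more: a strict order-homomorphism from $Q$ into a linear order. The paper's route is more self-contained in that it never uses anything beyond the two displayed properties of the $U_i$. On that point, one blemish in your write-up: your Fact 2 is argued from the \emph{explicit construction} of $U_i$ inside the proof of Lemma~\ref{pseudo-strat_map}, although you only granted yourself the $U_i$ through that lemma's statement; this is harmless because Fact 2 follows from property (2) alone --- if $x\in U_i\cap A_k$ and $A_k\subseteq\closure{A_j}$, then the open set $U_i$ contains the point $x\in\closure{A_j}$, hence $U_i\cap A_j\neq\emptyset$, and property (2) gives $A_i\subseteq\closure{A_j}$ for $j<i$. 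A second, purely verbal slip: the $\lhd$-predecessors of $A_i$ are not \emph{exactly} the cells $A_k$ with $U_i\cap A_k=\emptyset$ (such a cell may also come after $A_i$); only the implication ``predecessor $\Rightarrow$ disjoint from $U_i$'' holds, but that is all your reduction actually uses.
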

\begin{proof} Let $\Pi=\{A_i~|~i <\lambda\}$  be a pseudo-stratification (with $\lambda\leq \omega)$, and let $\langle U_i~|~i< \lambda\rangle$  be a sequence satisfying the conditions of Lemma \ref{pseudo-strat_map}. It is clear that, in order to prove our intended result, it is enough to construct a total order $\unlhd$ on the set $\{i\in\omega|i<\lambda\}=\lambda\subseteq\omega$, such that
$$U_i \cap A_j \not=\emptyset \, \Rightarrow\, i \unlhd j.$$

For this, we
first define a reflexive relation $R$ on $\lambda$, by setting
$$i R j \Longleftrightarrow  U_i\cap A_j \not=\emptyset.$$

\emph{Claim}: There are no non-trivial cycles
$$i_1 R \cdots i_n R i_1 \,\, \mbox{ (with distinct $i_k$'s)}.$$

\smallskip

\emph{Proof of Claim}: Let $i_1 R \cdots i_n R i_1$ be a non-trivial cycle of \emph{minimal length} $n\geq 2$. There are two cases:

\smallskip

\textbf{Case 1}: $n=2$, i.e., $i_1 R i_2 R i_1$ with $i_2\not= i_1$. We must have either $i_1 < i_2$ or $i_2 < i_1$. Without loss of generality, we can assume $i_1 < i_2$ (otherwise, just swap $i_1$ and $i_2$, and use the cycle $i_2 R i_1 R i_2$). From $i_2 R i_1$, we get $U_{i_2} \cap A_{i_1}\not=\emptyset$. This together with $i_1 < i_2$, gives us $A_{i_2} \subseteq \closure{ A_{i_1}}$ (by condition $(2)$ from Lemma 2), and hence $U_{i_1} \cap A_{i_2}\subseteq U_{i_1}\cap \closure{A_{i_1}}=A_{i_1}$. From this, we get that $U_{i_1} \cap A_{i_2}= (U_{i_1} \cap A_{i_2}) \cap A_{i_2}\subseteq A_{i_1}\cap A_{i_2}=\emptyset$ (since $i_1\not= i_2$, so $A_{i_1}$ and $A_{i_2}$ are different answers, hence disjoint), so we conclude that  $U_{i_1} \cap A_{i_2}=\emptyset$.
But on the other hand, from $i_1 R i_2$ we get $U_{i_1} \cap A_{i_2}\not=\emptyset$. Contradiction.

\smallskip

\textbf{Case 2}: $n>2$. Since all the $i_k$'s are distinct, there must exist a (unique) smallest index in the cycle.
Without loss of generality (since otherwise we can rearrange the indices, permuting the cycle), we can assume that $i_3$ is the smallest index. (Note that, since $n>2$, there must be at least three distinct successive indices $i_1, i_2, i_3$.) So $i_3<i_1$ and $i_3< i_2$. From $i_2 R i_3$ we get $U_{i_2}\cap A_{i_2}\not=\emptyset$. Since $i_3 < i_2$, it follows that $A_{i_2}\subseteq \closure{A_{i_3}}$ (by Lemma 2). But on the other hand, $i_1 R i_2$ gives us $U_{i_1}\cap A_{i_2}\not=\emptyset$. We hence obtain $U_{i_1}\cap \closure{A_{i_3}}\not=\emptyset$. This, together with $i_3< i_1$, gives us $A_{i_1}\subseteq\closure{A_{i_3}}$ (again by Lemma 2). From this, we derive  $A_{i_1}\subseteq U_{i_1}\cap \closure{A_{i_3}}$ (since $A_i\subseteq U_i$ for all $i$). Let now $s\in A_{i_1}$ be any state satisfying the answer $A_{i_1}\subseteq  U_{i_1}\cap \closure{A_{i_3}}$. So we have $s\in U_{i_1}$ and $s\in \closure{A_{i_3}}$, which together imply that $U_{i_1}\cap A_{i_3}\not=\emptyset$ (since $s\in \closure{A_{i_3}}$ implies that every open neighborhood of $s$ intersects $A_{i_3}$). Hence, we have $i_1 R i_3$, which means we can shorten the cycle by eliminating $i_2$, we obtain contradiction.

\medskip

Given the above Claim, it follows that the transitive closure $R^*$ is a partial order on $\lambda$ (which obviously includes $R$). By the Order Extension Principle, we can extend $R^*$ to a total order $\unlhd$ on $\lambda$, which still includes $R$.\end{proof}

\section{Topological Characterization of Solvability}\label{topological_characterizations}

\begin{definition} Let $\ES=(S,\Obs)$ be an epistemic space, $L$ be a standard agent, $A\subseteq S$, and $s\in A$. An \emph{$A$-locking sequence for $s$ (with respect to $L$)} is a data sequence $\sigma=(O_1, \ldots, O_k)$, such that:
\begin{enumerate}
\item $\sigma$ is sound for $s$, i.e., $s\in \bigcap_{1\leq i\leq k} O_i$;
\item if $\delta$ is any data sequence sound for $s$, then $L(\ES,\sigma\ast\delta)\subseteq A$.
\end{enumerate}

For a given data sequence $\sigma$, we denote by $L_A^{\sigma}$ the set of all states in $A$ having $\sigma$ as an $A$-locking sequence, i.e.,
$$L_A^{\sigma}:=\{s\in A~|~ \sigma \mbox{ is an $A$-locking sequence for $s$ wrt $L$} \}.$$
\end{definition}

\begin{lemma}\label{ver_bigsum} If $A$ is verifiable in the limit by a standard agent $L$, then $\bigcup_{\sigma\in {\mathcal O}^\ast} L_A^{\sigma}=A.$
\end{lemma}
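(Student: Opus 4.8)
The plan is to prove the two inclusions separately. The inclusion $\bigcup_{\sigma\in{\mathcal O}^\ast} L_A^\sigma\subseteq A$ is immediate from the definition, since $L_A^\sigma\subseteq A$ for every $\sigma$ by construction. So the whole content lies in the reverse inclusion $A\subseteq\bigcup_{\sigma\in{\mathcal O}^\ast} L_A^\sigma$, which amounts to showing that \emph{every} state $s\in A$ possesses at least one $A$-locking sequence with respect to $L$. I would prove this by contradiction, in the style of the classical locking-sequence argument from learning theory: assume some $s\in A$ has no $A$-locking sequence, and build from this assumption a single (sound and complete) data stream $\vec O$ for $s$ on which $L$ fails to stabilize on $A$, contradicting verifiability.

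The key observation driving the construction is the following. Fix an enumeration $\Obs_s=\{P_0,P_1,\ldots\}$ of the (countably many) observables true at $s$, listing each such observable at least once. Now suppose $\tau$ is any finite data sequence sound for $s$. Since $s$ has no $A$-locking sequence, $\tau$ in particular fails to be one; as $\tau$ already satisfies clause $(1)$, it must violate clause $(2)$, so there is a data sequence $\delta$, sound for $s$, with $L_\ES(\tau\ast\delta)\not\subseteq A$. I would use this to define recursively a chain of finite sound sequences $\tau_0,\tau_1,\ldots$, each extending the previous: start with $\tau_0$ the empty sequence; given $\tau_m$, pick $\delta_m$ sound for $s$ with $L_\ES(\tau_m\ast\delta_m)\not\subseteq A$, and set $\tau_{m+1}:=\tau_m\ast\delta_m\ast(P_m)$. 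Let $\vec O$ be the data stream obtained as the limit of this chain.

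It then remains to verify two things about $\vec O$. First, $\vec O$ is genuinely a data stream for $s$: every entry is either drawn from some sound $\delta_m$ or equals some $P_m\in\Obs_s$, so the stream is sound; and since each $P\in\Obs_s$ appears literally as some entry $O_n$ of $\vec O$, we have $\bigcap_{i=0}^n O_i\subseteq O_n=P$, so the stream is complete. Second, by construction $L_\ES(\vec O[n_m])\not\subseteq A$ at the length $n_m$ marking the end of each block $\tau_m\ast\delta_m$, i.e., $A\notin{\mathcal L}_\ES(\vec O[n_m])$ for infinitely many $n_m$. Since $s\in A$, this directly contradicts the verifiability of $A$ by $L$, which requires $A\in{\mathcal L}_\ES(\vec O[n])$ (equivalently $L_\ES(\vec O[n])\subseteq A$) for all sufficiently large $n$. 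Hence $s$ must have an $A$-locking sequence after all, completing the reverse inclusion.

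I expect the main obstacle to be the bookkeeping in the construction of $\vec O$: one must interleave the ``diagonalizing'' extensions $\delta_m$ (which force the failures $L_\ES\not\subseteq A$) with the observables $P_m$ (which guarantee completeness) so that the resulting infinite stream is simultaneously sound and complete for $s$, ensuring that verifiability genuinely applies to it. The negation of clause $(2)$, available precisely because each $\tau_m$ is sound yet non-locking, is exactly what makes every recursion step possible, and the only real care needed is to confirm that the limit object is a legitimate data stream for $s$.
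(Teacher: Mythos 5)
Your proof is correct and follows essentially the same diagonalization as the paper: repeatedly use the failure of clause (2) to extend sound sequences into ones where $L\not\subseteq A$, interleaved with data guaranteeing completeness, yielding a sound and complete stream for $s\in A$ on which $L$ fails infinitely often. The only cosmetic difference is that you ensure completeness by enumerating $\Obs_s$ directly, while the paper interleaves the entries of a pre-chosen sound and complete stream for $s$ — these are interchangeable.
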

\begin{proof} Suppose not. Let $A$ be verifiable in the limit, but such that $A\not=\bigcup_{\sigma\in{\mathcal O}^*} L_A^{\sigma}$. Since all $L_A^{\sigma}\subseteq A$, his means that $A\not\subseteq\bigcup_{\sigma\in{\mathcal O}^*} L_A^{\sigma}$, i.e., there exists some state
$s\in A$ for which there is no $A$-locking sequence. This means that every data sequence $\sigma$ that is sound for $s$ can be extended to a sequence $\delta$ that is also sound for $s$ and has $L(\delta)\not\subseteq A$.

Let now $\vec{O}$ be a (sound and complete) data stream for $s$. We construct a new infinite data stream $\vec{V}$, by defining increasingly longer initial segments $\delta_k$ of $\vec{O}$, in countably many stages: we first set $V_0=O_0$, thus obtaining an initial segment $\delta_0 =(O_0)=(V_0)$; at the $k+1$-th stage, given some initial segment $\delta_k=(V_0, V_1, \ldots , V_{n_k})$ (of some length $n_k$), we built our next initial segment by taking any extension $\delta_{k+1}$ of the sequence $\sigma_k = (V_0, \ldots, V_{n_k}, O_{n+1})$ that is sound for $s$ and has $L(\sigma_k)\not\subseteq A$. The resulting infinite stream  $\vec{V}$ is a (sound and complete) stream for $s$ (the completeness of $\vec{V}$ with respect to $s$ follows the fact that this stream includes all the elements of $\vec{U}$), but which contains arbitrarily long initial segments $\sigma_k$ with $L(\sigma_k)\not\subseteq A$. Since $s\in A$, this contradicts the assumption that $A$ is verifiable in the limit.\end{proof}

\begin{lemma}\label{ver_loc_closed} If $A\subseteq S$ is verifiable in the limit by a standard agent $L$, then for every data sequence $\sigma=(O_1, \ldots O_k)$, the set $L_A^{\sigma}$ is locally closed.
\end{lemma}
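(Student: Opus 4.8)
The plan is to show that $L_A^{\sigma}$ is \emph{relatively closed} in the open set $U_\sigma := \bigcap_{i=1}^{k} O_i$ (a finite intersection of observables, hence open in $\tau_\ES$); this is exactly local closedness, since relative closedness in $U_\sigma$ means $L_A^{\sigma} = U_\sigma \cap \closure{L_A^{\sigma}}$, exhibiting $L_A^{\sigma}$ as the intersection of the open set $U_\sigma$ with the closed set $\closure{L_A^{\sigma}}$. Note first that condition $(1)$ in the definition of an $A$-locking sequence forces $L_A^{\sigma} \subseteq U_\sigma$, so the inclusion $L_A^{\sigma} \subseteq U_\sigma \cap \closure{L_A^{\sigma}}$ is automatic; the whole content is the reverse inclusion $U_\sigma \cap \closure{L_A^{\sigma}} \subseteq L_A^{\sigma}$. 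So I would fix a state $s \in U_\sigma \cap \closure{L_A^{\sigma}}$ and verify that $\sigma$ is an $A$-locking sequence for $s$, i.e. that $s \in A$ and that conditions $(1)$--$(2)$ hold; condition $(1)$ is immediate from $s \in U_\sigma$.

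The heart of the argument is establishing condition $(2)$ for $s$, which I would do by contraposition. Suppose $(2)$ fails, so that there is a data sequence $\delta = (V_0, \ldots, V_m)$ sound for $s$ with $L_\ES(\sigma\ast\delta) \not\subseteq A$. Then I consider the open neighborhood $W := U_\sigma \cap \bigcap_{j=0}^{m} V_j$ of $s$ (open as a finite intersection of observables, and containing $s$ because $\delta$ is sound for $s$ and $s\in U_\sigma$). The claim is that $W$ is disjoint from $L_A^{\sigma}$: any $t \in W$ satisfies $t \in U_\sigma$ (so $\sigma$ is sound for $t$) and $t \in \bigcap_j V_j$ (so $\delta$ is sound for $t$); hence if such a $t$ were in $L_A^{\sigma}$, applying condition $(2)$ for $t$ to the sequence $\delta$ would yield $L_\ES(\sigma\ast\delta) \subseteq A$, contradicting the choice of $\delta$. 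Thus $W$ is an open neighborhood of $s$ with $W \cap L_A^{\sigma} = \emptyset$, so $s \notin \closure{L_A^{\sigma}}$, contradicting our assumption. Hence $(2)$ holds for $s$.

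It remains to show $s \in A$, and this is where verifiability enters. Having established $(2)$, I would extend $\sigma$ to a sound and complete data stream $\vec{O}$ for $s$ (concatenating after $\sigma$ an enumeration of all observables in $\Obs_s$, which is possible since $\Obs$ is countable). For every $n$ beyond the length of $\sigma$, the initial segment $\vec{O}[n]$ has the form $\sigma\ast\delta_n$ with $\delta_n$ sound for $s$, so condition $(2)$ gives $L_\ES(\vec{O}[n]) \subseteq A$ for all such $n$. Since $L$ verifies $A$ in the limit, the ``iff'' in the definition of verification, applied to this particular stream, forces $s \in A$. Together with condition $(1)$ and the now-established condition $(2)$, this gives $s \in L_A^{\sigma}$, completing the reverse inclusion. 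I expect the main obstacle to be the neighborhood construction $W$ in the second step: one must choose the open set so that every one of its points inherits soundness of \emph{both} $\sigma$ and the failure-witness $\delta$, which is precisely what lets the locking property of a nearby point contradict the failure at $s$. Obtaining $s \in A$ is then a comparatively routine appeal to verifiability, and the degenerate case (empty $\delta$, forcing $L_A^{\sigma}=\emptyset$) is handled automatically since the empty set is trivially locally closed.
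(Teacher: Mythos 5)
Your proposal is correct and follows essentially the same route as the paper's proof: both establish the identity $L_A^{\sigma} = U_\sigma \cap \closure{L_A^{\sigma}}$, use a point of $L_A^{\sigma}$ inside a suitable basic open neighborhood to transfer the locking property to $s$ (you phrase this contrapositively via the witness $\delta$ and the neighborhood $W$, the paper argues it directly, but it is the same idea), and then invoke verifiability along a sound and complete stream extending $\sigma$ to conclude $s \in A$. No substantive difference or gap.
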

\begin{proof} Let $O:=\bigcap_{i=1}^k O_i$ be the intersection of all the observations in $\sigma$. We will show that $$O \cap \closure{L_A^{\sigma}}= L_A^{\sigma},$$
from which the desired conclusion follows.

($\supseteq$) If $s\in L_A^{\sigma}$, then $\sigma$ is an $A$-locking sequence for $s$, hence $\sigma$ is sound for $s$, and thus $s\in \bigcap_{i=1}^{n} O_i=O$.

($\subseteq$) Suppose that $s\in O\cap \closure{L_A^{\sigma}}$. We prove two claims:

\smallskip

\par\noindent\textbf{Claim 1}: For every data sequence $\delta$ that is sound for $s$ and extends $\sigma$, we have $L_\ES (\delta)\subseteq A$.

\emph{Proof of Claim 1}: Let $\delta= (\delta_1,\ldots, \delta_n)$ be a data sequence that is sound for $s$ (i.e., $s\in \delta_i$ for all $i=1,\ldots,n$) and extends $\sigma$, i.e., $n\geq k$ and $U_i=O_i$ for all $i\leq k$). Hence, $\bigcap_{i=1}^n \delta_i$ is an open neighborhood of $s$, and $s\in \closure{L_A^{\sigma}}$, so there must exist some $t\in \bigcap_{i=1}^n \delta_i$ such that $t\in L_A^{\sigma}$. Hence, $t\in A$ and $\sigma$ is an $A$-locking sequence for $t$. But $\delta$ extends $\sigma$ and is sound for $t$, so (by the definition of $\sigma$ being an $A$-locking sequence for $t$), we have that $L (\delta)\subseteq A$, which concludes the proof of Claim 1.

\smallskip

\par\noindent\textbf{Claim 2}: We have $s\in A$.

\emph{Proof of Claim 2}: Let $\vec{V}$ be a stream for $s$ that extends $\sigma$ (such a stream must exist, since $\sigma$ is sound for $s$: just take any stream for $s$ and prefix it with $\sigma$). Then, for every $n\geq k$, the sequence $\delta_n = (V_1, \ldots, V_n)$ is sound for $s$ and extends $\sigma$. Hence, by the above Claim, we must have that $L_{\ES} (V_1, \ldots, V_n)\subseteq A$ for all $n\geq k$. But we assumed that $A$ is verifiable in the limit, so we must have $s\in A$, which concludes the proof of Claim 2.

\medskip

From Claims 1 and 2 together, we conclude that $\sigma$ is an $A$-locking sequence for $s\in A$, hence $s\in L_A^\sigma$.\end{proof}

\begin{theorem} Given an epistemic space $(S, {\mathcal O})$, a set $A\subseteq S$ is verifiable in the limit iff it is $\omega$-constructible.
\end{theorem}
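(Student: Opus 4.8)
The plan is to prove both implications. The forward direction falls out immediately from the machinery already in place, while the converse requires a genuine construction, and that is where the real difficulty lies.

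For ``verifiable $\Rightarrow$ $\omega$-constructible'': by Proposition \ref{standard} it suffices to assume that $A$ is verified in the limit by some \emph{standard} agent $L$. Lemma \ref{ver_bigsum} then gives $A=\bigcup_{\sigma\in\mathcal{O}^\ast}L_A^\sigma$, and Lemma \ref{ver_loc_closed} shows each $L_A^\sigma$ is locally closed. Since $\mathcal{O}^\ast$ is countable, this exhibits $A$ as a countable union of locally closed sets, i.e.\ as $\omega$-constructible. This half needs no further work.

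For the converse I would first isolate the base case. Write $A=\bigcup_{i\in\Nat}A_i$ with each $A_i=U_i\cap\closure{A_i}$ locally closed ($U_i$ open). Along a data stream $\vec{O}$ for a state $s$, let $O^{(n)}:=\bigcap_{l<n}\vec{O}_l$ be the basic open determined by the first $n$ observations; since a data stream is a neighborhood base at $s$, the $O^{(n)}$ form a decreasing base at $s$. Define two flags computable from $\vec{O}[n]$: the monotone increasing $a_i(n):=[O^{(n)}\subseteq U_i]$ and the monotone decreasing $b_i(n):=[O^{(n)}\cap\closure{A_i}\neq\emptyset]$, and set $v_i(n):=a_i(n)\wedge b_i(n)$. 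A short argument using the base property shows $v_i(n)$ is eventually constantly true iff $s\in A_i$: the open part $U_i$ gets confirmed exactly when $s\in U_i$, and the closed part $\closure{A_i}$ stays unrefuted exactly when $s\in\closure{A_i}$. Hence every single $A_i$ is verifiable in the limit.

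The main obstacle is to verify the \emph{countable union} $A$. Naively believing $A$ whenever some $v_i(n)$ is true fails: even when $s\notin A$, the ``temporary yes'' windows of infinitely many $A_i$ can overlap to cover all stages, so the agent never retracts. I would resolve this with a priority-plus-mind-change device. Fix a bijection $\Nat\times\Nat\to\Nat$ and call a pair $\langle i,k\rangle$ \emph{confirmed at stage $n$} if $n\geq k$ and $v_i(m)$ is true for all $m$ with $k\leq m\leq n$; confirmation is monotone decreasing in $n$, so each pair is either confirmed from stage $k$ onward forever (exactly when $v_i$ is eventually constantly true from $k$) or eventually permanently unconfirmed. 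Let $e(n)$ be the least pair confirmed at stage $n$ (undefined if none, with $e(-1)$ undefined by convention), and let the agent believe $A$ at stage $n$ iff $e(n)$ is defined and $e(n)=e(n-1)$, believing the trivial proposition $S$ otherwise. The key claim is that $e(n)$ stabilizes iff $s\in A$: if $s\in A_j$ then the least \emph{eventually}-confirmed pair $e^\ast$ exists, the finitely many pairs below it are eventually unconfirmed, so $e(n)=e^\ast$ from some stage on; if $s\notin A$ no pair is eventually confirmed, so below any fixed value only finitely many confirmations ever occur and $e(n)$ cannot settle. Since $e(n)=e(n-1)$ holds from some stage onward exactly when $e(n)$ stabilizes, the agent eventually always believes $A$ iff $s\in A$. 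A routine check that this defines a (standard) agent meeting the consistency condition (the only nontrivial belief value $A$ is nonempty whenever it is adopted, since confirmation forces $O^{(n)}\cap A_i\neq\emptyset$) then completes the construction.
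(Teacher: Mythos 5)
Your proposal is correct, and while your left-to-right half coincides exactly with the paper's proof (Proposition~\ref{standard} to pass to a standard agent, then Lemma~\ref{ver_bigsum} and Lemma~\ref{ver_loc_closed} together with countability of ${\mathcal O}^\ast$), your right-to-left construction takes a genuinely different route, and the difference is substantive. The paper first makes the union disjoint (via Proposition~\ref{omega_countable_union}) and then runs a one-shot greedy agent: believe $A^c$ if the data entail no $U_n$; otherwise take the \emph{first} index $n^\ast$ with data $\subseteq U_{n^\ast}$ and believe $A$ iff the data still meet $C_{n^\ast}$. Your warning that naive agents fail on countable unions is well placed: as literally stated, the paper's greedy agent is itself defeated by such an example. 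In the space of Figure~\ref{fig_gold_neg}, take $A=\{s_n\mid n\in\Nat\}=\bigcup_n\{s_n\}$, with the (essentially forced) representation $U_n=O_n$ and $C_n=\closure{\{s_n\}}=\{s_0,\dots,s_n\}$; every data intersection is some $O_m\subseteq O_0=U_0$, so $n^\ast=0$ at every stage, and for any state $s_j$ with $j\geq 1$ the data eventually entail $C_0^c$, whence the greedy agent believes $A^c$ forever although $s_j\in A$ (re-decomposing or re-enumerating the pieces does not help). Your priority construction over pairs $\langle i,k\rangle$, with belief in $A$ tied to stability of the least confirmed pair $e(n)$, is exactly the mind-change bookkeeping needed to avoid this trap: a forever-confirmed pair exists iff some $v_i$ is eventually constantly true iff $s\in A_i$ (your flag analysis of $a_i$ and $b_i$ is right, using soundness for one direction and the neighborhood-base property for the other), and conversely an eventually constant $e(n)$ forces some $v_i$ to hold on a final segment, hence $s\in A$; so your agent verifies $A$ at every state and stream. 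Two further gains of your route: it needs no disjointness of the pieces, so Proposition~\ref{omega_countable_union} can be bypassed entirely, and the verifiability of each single locally closed piece is cleanly isolated as a base case. The only items to spell out are the routine ones you already flag: all flags are functions of the observed sequence $\vec{O}[n]$ alone, so this is a legitimate standard agent, and on epistemic spaces other than $\ES$ it should output the trivial belief to satisfy the paper's definition of an agent.
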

\begin{proof}
($\Leftarrow$) Assume $A= \bigcup_n (U_n\cap C_n)$ is a countable disjoint union of (mutually disjoint) locally closed sets $U_n\cap C_n$ (with $U_n$ open and $C_n$ closed). We define a standard agent $L$ for $A$ on finite data sequences $\delta= (O_1, \ldots, O_k)$, by setting $L(\ES,\delta)=A^c$, if we have $\bigcap_j O_j \not\subseteq U_n$ for all $n\in\omega$; $L(\ES,\delta)=A^c$ (where $A^c$ is the complement of $A$), if $\bigcap_j O_j \subseteq C_n^c$ holds for the first index $n\in\omega$ such that $\bigcap_j O_j \subseteq U_n$; and $L(\ES,\delta)=A$ otherwise. Then it is easy to see that $L$ verifies $A$ in the limit.

($\Rightarrow$) Suppose that $A$ is verifiable in the limit. By Proposition \ref{standard}, it is then verifiable by a standard agent $L$. By Lemma 1, $A$ is the union of all sets $L_A^{\sigma}$ for all finite data sequences $\sigma$. But there are only countably many such sequences, so this is a countable union. Moreover, by Lemma 2, each $L_A^{\sigma}$ is locally closed. Hence $A$ is a countable union of locally closed sets, i.e., an $\omega$-constructible set.
\end{proof}

\begin{corollary}\label{decidable}
$A$ is decidable in the limit iff both $A$ and $A^c$ are $\omega$-constructible.
\end{corollary}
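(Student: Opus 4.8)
The plan is to reduce the corollary to the preceding Theorem by exploiting a simple duality between falsifiability and verifiability. First I would observe that an agent falsifies $A$ in the limit precisely when it verifies the complement $A^c$ in the limit. Indeed, unwinding the two definitions: falsifying $A$ means that for every state $s$ and every data stream $\vec{O}$ for $s$, we have $s\notin A$ iff the agent eventually stabilizes on believing $A^c$ (i.e.\ there is some $k$ with $A^c\in {\mathcal L}_\ES(\vec{O}[n])$ for all $n\geq k$). Rewriting the condition $s\notin A$ as $s\in A^c$, this is literally the statement that the agent verifies $A^c$ in the limit. Hence $A$ is falsifiable in the limit iff $A^c$ is verifiable in the limit.

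Given this observation, the corollary follows by a short chain of equivalences. By definition, $A$ is decidable in the limit iff it is both verifiable and falsifiable in the limit. By the duality just noted, this holds iff $A$ is verifiable in the limit \emph{and} $A^c$ is verifiable in the limit. Applying the Theorem once to $A$ and once to $A^c$, each of these verifiability conditions is equivalent to the corresponding $\omega$-constructibility condition. Therefore $A$ is decidable in the limit iff both $A$ and $A^c$ are $\omega$-constructible, as required.

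There is essentially no hard step here: the entire argument is a routine combination of the Theorem with the definition of decidability. The only point that deserves explicit attention is the duality observation, and even that is immediate once one notices that the definition of falsifiability of $A$ is nothing but the definition of verifiability applied to $A^c$. I would therefore state this duality as a single clarifying line so that the two invocations of the Theorem (for $A$ and for $A^c$) are clearly justified, and I would keep the rest of the proof to the displayed chain of ``iff''s.
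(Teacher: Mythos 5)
Your proof is correct and follows exactly the route the paper intends: the paper's own proof is simply ``Follows trivially from the above results,'' and your argument spells out those trivial details (the falsifiability-of-$A$ equals verifiability-of-$A^c$ duality, plus two applications of the preceding theorem). Nothing is missing and nothing differs in substance.
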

\begin{proof} Follows trivially from the above results. \end{proof}

\begin{theorem}\label{solvable_omega_refinement} Let ${\Prob}= (\ES,{\mathcal Q})$ be an inductive problem on an epistemic space $\ES$. The following are equivalent:
\begin{enumerate}
\item ${\Prob}$ is solvable (in the limit);
\item the associated question ${\mathcal Q}$ is an (at most) countable family of $\omega$-constructible answers;
\item ${\mathcal Q}$ has an (at most) countable locally closed refinement.
\end{enumerate}
\end{theorem}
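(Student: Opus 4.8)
The plan is to prove the cycle of implications $(1)\Rightarrow(2)\Rightarrow(3)\Rightarrow(1)$, leaning throughout on Proposition~\ref{standard} to replace arbitrary agents by standard ones, on the earlier characterization (verifiable in the limit $\iff$ $\omega$-constructible), on Proposition~\ref{omega_countable_union}, and---for the hard direction---on a direct construction of a solving agent by amalgamating genuine deciders for locally closed cells.

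For $(1)\Rightarrow(2)$, I would first use Proposition~\ref{standard} to fix a \emph{standard} agent $L$ solving $\Prob$. Two things must be checked: that each answer is $\omega$-constructible, and that there are only countably many answers. For the first, I turn $L$ into a verifier of a fixed answer $A$ by setting its conjecture to $A$ whenever $L(\sigma)\subseteq A$ and to $A^c$ otherwise; on any stream for $s$, since answers are pairwise disjoint and $L$ is globally consistent, the nonempty set $L(\vec O[n])$ is eventually contained in the true answer $A_s$ and therefore eventually contained in $A$ precisely when $A=A_s$, so the modified agent verifies $A$. The characterization of verifiability then yields that $A$ is $\omega$-constructible. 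For countability, I mimic Lemma~\ref{ver_bigsum}: the same stream-building argument shows every inhabited answer $A$ has a locking sequence $\sigma_A$. The assignment $A\mapsto\sigma_A$ is injective into the countable set $\Obs^\ast$, for if two distinct answers $A,B$ shared a locking sequence $\sigma$, then (both being sound for it) $\sigma\ast\sigma$ would be sound for witnesses in $A$ and in $B$, forcing $L(\sigma\ast\sigma)\subseteq A\cap B=\emptyset$ and contradicting global consistency.

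The step $(2)\Rightarrow(3)$ is short: apply Proposition~\ref{omega_countable_union} to each of the countably many $\omega$-constructible answers, writing it as a disjoint countable union of locally closed sets. The collection of all these pieces is a partition refining ${\mathcal Q}$; it is a countable union of countable families, hence countable, and every cell is locally closed---i.e.\ a countable locally closed refinement.

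The main work, and the main obstacle, is $(3)\Rightarrow(1)$. I would first record the reduction that solving a refinement solves the original: if a standard agent eventually outputs the refined answer $A'_s\subseteq A_s$, then its output is eventually contained in $A_s$ as well, so it solves $\Prob$. Thus it suffices to solve a countable locally closed question $\{A_i\mid i\in\Nat\}$. For each cell, write $A_i=U_i\cap C_i$ (open $\cap$ closed) and build a genuinely \emph{converging} decider $D_i$: detect ``$s\in U_i$'' via the monotone event $\bigcap\sigma\subseteq U_i$ and ``$s\notin C_i$'' via the monotone event $\bigcap\sigma\subseteq C_i^c$, declaring ``$A_i^c$'' whenever $\bigcap\sigma\not\subseteq U_i$ or $\bigcap\sigma\subseteq C_i^c$, and ``$A_i$'' otherwise. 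I would then amalgamate by the natural well-order on $\Nat$, conjecturing the least-index $A_i$ currently decided ``in''. The crux is convergence for the true state $s\in A_{i_0}$, and here the subtlety is that \emph{verifiability alone is not enough}: it only guarantees the negative verdict infinitely often, which would let small-index cells flicker back into the conjecture forever. The fix is precisely to \emph{decide} each locally closed cell outright---its complement is $\omega$-constructible, cf.\ Corollary~\ref{decidable}, so every wrong cell is \emph{permanently} abandoned---and then to exploit well-foundedness of the index order, since only the \emph{finitely many} indices below $i_0$ can usurp the conjecture, and all of them are eventually and permanently rejected while $D_{i_0}$ eventually commits to $A_{i_0}$. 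This pins the agent's output to $A_{i_0}$ from some stage on, establishing solvability.
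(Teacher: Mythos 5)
Your steps $(1)\Rightarrow(2)$ and $(2)\Rightarrow(3)$ are essentially the paper's own argument: the same conversion of a standard solver $L$ into a verifier of each fixed answer, the same locking-sequence count showing that ${\mathcal Q}$ injects into the countable set ${\mathcal O}^\ast$ (the paper gets injectivity from the fact that the nonempty set $L(\sigma)$ can lie inside at most one answer; your $\sigma\ast\sigma$ trick is an equivalent variant), and the same appeal to Proposition~\ref{omega_countable_union} to turn $\omega$-constructible answers into a countable locally closed refinement.

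The real divergence is in $(3)\Rightarrow(1)$, where your construction is genuinely different from the paper's --- and in fact more careful. The paper chooses \emph{arbitrary} standard deciders $L_i$ for the cells $B_i$ and amalgamates them by the union rule $L(\sigma):=\bigcup\{B_i \mid L_i(\sigma)\subseteq B_i\}$, asserting that it is ``easy to see'' that this solves the refined question. That assertion is problematic: the convergence of the deciders is pointwise in $i$, not uniform, so at every stage some \emph{large-index} wrong cell (a different one at each stage) may still be accepted, and then the union never settles inside the true cell. This is not just an issue for adversarially chosen deciders (which the paper's phrasing permits, since altering a decider on finitely many inputs leaves it a decider); it already occurs for the natural deciders built from representations $B_i=U_i\cap C_i$. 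For instance, take $S=\{s\}\cup\{t_n,u_n \mid n\in\Nat\}$ with observations $O_n=\{s\}\cup\{t_m,u_m \mid m\geq n\}$, $W=\{u_n \mid n\in\Nat\}$, $V_j=\{t_j\}\cup O_{j+1}$, and the countable locally closed partition $B_\infty=W^c$, $B_j=\{u_j\}=O_j\cap V_j^c$: on the data stream $(O_n)_{n\in\Nat}$ for $s$, the canonical decider of $B_j$ accepts precisely at stage $j$, so the paper's union equals $B_\infty\cup\{u_n\}$ at stage $n$ and is never contained in $B_\infty$. Your least-index (priority) amalgamation is immune to exactly this failure: cells with index above that of the true cell are simply ignored, and only the \emph{finitely many} cells below it must be permanently rejected --- which they are, since falsifying in the limit is by definition an eventually-permanent verdict --- after which the conjecture is pinned to the true cell. (For the same reason, your explicit ``monotone-event'' deciders are a convenience rather than a necessity: any deciders whatsoever work under the least-index rule, whereas no uniform choice rescues the union rule in the example above.) So your proof is correct, follows the paper's skeleton on the first two implications, and on the crucial third implication it replaces the paper's flawed amalgamation by one that actually closes the gap.
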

\begin{proof}
\par\noindent \textbf{$(1)\Rightarrow (2):$}
Let $\Prob$ be a solvable problem.
By Proposition \ref{standard}, there exists some standard agent that solves it. Let $L$ be such a standard agent that solves $\Prob$.

\par\noindent\textbf{Claim:} Every answer $A\in {\mathcal Q}$ is verifiable in the limit.

\emph{Proof of Claim}: Let $A\in {\mathcal Q}$ be an answer. We construct a standard agent $L^A$ that verifies it, by setting
$L^A_{\ES} (\sigma):= A$ iff $L_{\ES} (\sigma)\subseteq A$, and $L^A_{\ES} (\sigma):= A^c$ otherwise. It is easy to see that $L^A$ verifies $A$.

\medskip

Using the Claim and Lemma \ref{ver_bigsum}, we obtain that, for each answer $A\in {\mathcal Q}$, there exists some data sequence $\sigma\in {\mathcal O}^*$ such that
$L_{\ES} (\sigma)\subseteq A$. But ${\mathcal O}^*$ is countable, so there can be only countably many answers in ${\mathcal Q}$.

By the claim above, Lemma \ref{ver_bigsum} and Lemma \ref{ver_loc_closed}, we obtain that every answer $A\in {\mathcal Q}$ is a countable union of locally closed sets, hence it is $\omega$-constructible.

\smallskip

\par\noindent\textbf{ $(2)\Rightarrow (3):$}
By $(2)$, ${\mathcal Q}$ is (at most) countable, say ${\mathcal Q}=\{A_i~|~i\in \omega\}$, and also each answer $A_I\in {\mathcal Q}$ is $\omega$-constructible, hence it can be written as a countable disjoint union of locally closets $A=\bigcup_{k\in\omega} A_i^k$ (where all $A_i^k$'s locally closed and mutually disjoint). Then the question $\{A_i^k ~|~ i\in \omega, k\in\omega\}$ is a refinement of ${\mathcal Q}$, which is countable and locally closed.

\smallskip

\par\noindent\textbf{ $(3)\Rightarrow (1):$}
Let ${\mathcal Q}'=\{B_i~|~i\in\omega\}$ be a countable closed refinement of ${\mathcal Q}'$. By Corollary \ref{decidable}, every answer $B\in{\mathcal Q}'$ is decidable, and so by Proposition \ref{standard}, we can choose for each $B_i\in{\mathcal Q}$ some standard agent $L_i$ that decides $B_i$. We define now a new standard agent $L$, by:
$$L_{\ES} (\sigma):= \bigcup\{ B_i ~|~i\in \omega \mbox{ such that } L_i (\sigma)\subseteq B_i\}.$$
It is easy to see that this agent $L$ solves ${\mathcal Q}'$, and since ${\mathcal Q}'$ is a refinement of ${\mathcal Q}$, $L$ also solves ${\mathcal Q}$.\end{proof}

\begin{corollary}\label{TD learnability}
An epistemic space $\ES=(S, \Obs)$ is learnable in the limit iff it is countable and satisfies the $TD$ separation axiom.
\end{corollary}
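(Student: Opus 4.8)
The plan is to reduce everything to the characterization of solvable problems already established in Theorem~\ref{solvable_omega_refinement}, applied to the learning question ${\mathcal Q}_S = \{\{s\} \mid s \in S\}$. By definition, $\ES$ is learnable in the limit precisely when the problem $(\ES, {\mathcal Q}_S)$ is solvable. So I would invoke the equivalence $(1)\Leftrightarrow(2)$ of Theorem~\ref{solvable_omega_refinement}: this problem is solvable iff ${\mathcal Q}_S$ is an (at most) countable family of $\omega$-constructible answers.

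The countability half is immediate. Since the answers of ${\mathcal Q}_S$ are exactly the singletons $\{s\}$, the family ${\mathcal Q}_S$ is (at most) countable iff $S$ itself is countable. This already accounts for the ``countable'' clause in the statement, so it remains only to match the $\omega$-constructibility condition with $TD$.

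The substance is to show that the condition ``every singleton $\{s\}$ is $\omega$-constructible'' is equivalent to $TD$. Here the key observation is that for singletons the notion of $\omega$-constructibility collapses to that of local closedness. Indeed, if $\{s\} = \bigcup_{n} A_n$ with each $A_n$ locally closed, then every $A_n \subseteq \{s\}$, so each $A_n$ is either $\emptyset$ or $\{s\}$; hence at least one term equals $\{s\}$, and $\{s\}$ is itself locally closed. The converse is trivial, since every locally closed set is $\omega$-constructible. Thus ``all singletons $\omega$-constructible'' is equivalent to ``all singletons locally closed''.

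Finally, I would note that ``$\{s\}$ is locally closed'' means exactly that there is an open $U$ with $\{s\} = U \cap \closure{\{s\}}$, which is precisely the equivalent formulation of the $TD$ axiom recorded in Definition~\ref{specialization}. Chaining these equivalences yields: $\ES$ is learnable iff $S$ is countable and every singleton is locally closed, iff $S$ is countable and $TD$ holds. No step here is a genuine obstacle; the only point worth pausing on is the collapse of $\omega$-constructibility to local closedness for singletons, which is the one place where the special structure of the learning question (all answers are points) is used in an essential way.
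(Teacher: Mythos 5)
Your proof is correct and follows essentially the same route as the paper's: apply Theorem~\ref{solvable_omega_refinement} to the learning question and observe that $\omega$-constructibility of singletons collapses to local closedness, which is the $TD$ axiom. The only difference is that you spell out the collapse argument (each locally closed piece of a singleton union is $\emptyset$ or the singleton itself), which the paper leaves implicit.
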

\begin{proof}
Apply Theorem \ref{solvable_omega_refinement} to the learning question $\{\{s\}~|~s\in S\}$, noticing that the fact that all its answers are $\omega$-constructible is equivalent to  all singletons being locally closed, which is just another formulation of the $TD$ axiom.\end{proof}

\section{Universality of Conditioning}

Our aim in this section is to show that $AGM$ conditioning is ``universal": every solvable problem can be solved by some $AGM$ agent.
First, we introduce an auxiliary notion, that of a problem being \emph{directly solvable by AGM conditioning}.

Given a question ${\mathcal Q}$ on an epistemic space $(S, {\mathcal O})$, any total order $\unlhd\subseteq {\mathcal Q}\times {\mathcal Q}$ on (the answers of) the question ${\mathcal Q}$ induces in a canonical way a total preorder $\leq\subseteq S\times S$, obtained by:
$$s \leq t \,\, \mbox{ iff } \,\, A_s \unlhd A_t$$
(where $A_s$ is the unique answer $A_s\in {\mathcal Q}$ such that $s\in A_s$).

\begin{definition}
A problem $\Prob= (\ES, {\mathcal Q})$ is \emph{directly solvable by conditioning} if it is solvable by AGM conditioning with respect to (a prior $\leq$ that is canonically induced, as explained above, by) a total order $\unlhd\subseteq {\mathcal Q}\times {\mathcal Q}$ on (the answers of) the question ${\mathcal Q}$.
\end{definition}

Direct solvability by conditioning essentially means that the problem can be solved by a conditioning agent \emph{who does not attempt to refine the original question}: she forms beliefs only about the answers to the given question, and is thus indifferent between states satisfying the same answer.
Direct solvability by conditioning is thus a very stringent condition, and unsurprisingly this form of conditioning is \emph{not} universal.

\begin{proposition}\label{genin} (\emph{K. Genin, personal communication})
Not every solvable problem is directly solvable by conditioning.
\end{proposition}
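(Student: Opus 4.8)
The plan is to exhibit a single explicit counterexample: a finite epistemic space $\ES$ together with a two-cell question $\mathcal Q=\set{A,B}$ that is solvable, yet cannot be solved by any AGM agent whose prior is induced by a total order on $\set{A,B}$. The guiding intuition is the promised correspondence between direct solvability and linear separation: to defeat every total order on the answers it suffices to arrange them into a ``closure cycle'', $A\cap\closure B\neq\emptyset$ and $B\cap\closure A\neq\emptyset$, so that neither $A\lhd B$ nor $B\lhd A$ can separate the later answer from the earlier one. The subtlety is that $T0$ forbids such a cycle between two individual points, so the cycle must be spread across answers that each contain two points.

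Concretely, I would take $S=\set{1,2,3,4}$ with observable properties $\Obs=\set{\set{1,2},\set 2,\set 3,\set{3,4}}$, and the question $\mathcal Q=\set{A,B}$ with $A=\set{1,3}$ and $B=\set{2,4}$. First I would record the observational topology: the specialization order is generated by $1\sqsubseteq 2$ and $4\sqsubseteq 3$ (two disjoint two-element chains), the minimal open neighbourhoods are exactly $\set{1,2},\set 2,\set 3,\set{3,4}$, and the relevant closures are $\closure A=\set{1,3,4}$ and $\closure B=\set{1,2,4}$. A direct check then gives $A\cap\closure B=\set 1\neq\emptyset$ and $B\cap\closure A=\set 4\neq\emptyset$, the desired cycle. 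I would also verify that both answers are locally closed, using $U=\set{1,2,3}$ to witness $A=U\cap\closure A$ and $U'=\set{2,3,4}$ to witness $B=U'\cap\closure B$.

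Solvability is then immediate: $\mathcal Q$ is itself a finite (hence countable) locally closed refinement of itself, so Theorem~\ref{solvable_omega_refinement}, $(3)\Rightarrow(1)$, applies. For non-direct-solvability I would argue by exhausting the two total orders on $\set{A,B}$. The key computational input is that the two ``corner'' states have an essentially unique data stream: the only observable containing $4$ is $\set{3,4}$, so every data stream for $4$ is the constant stream with observed set $\set{3,4}$; symmetrically, the only observable containing $1$ is $\set{1,2}$, giving observed set $\set{1,2}$. Suppose the prior comes from $A\lhd B$ (so $A$-states are strictly more plausible). Taking the true state to be $4$, the observed set is always $O=\set{3,4}$, whose $\leq$-minimal element is $3\in A$; by the Observation (that $\mathcal L^{\leq}_\ES(\vec\sigma)$ is the principal filter generated by $Min_{\leq}(\bigcap_i\sigma_i)$ when that set is nonempty), the AGM agent's belief is the principal filter generated by $\set 3$, so it believes $A$ and never $B$, and fails to solve. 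The symmetric choice $B\lhd A$ fails at the true state $1$, whose observed set $\set{1,2}$ has minimal element $2\in B$, forcing permanent belief in $B$. Since the only preorder on $S$ relevant to this problem is the one induced by the chosen order on the answers, no total order on $\mathcal Q$ yields a solving AGM agent.

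The main obstacle, and the only genuinely delicate point, is the simultaneous fulfilment of three competing requirements: the answers must be locally closed (to keep the problem solvable), the space must be $T0$ (as all observational topologies are), and yet the closure relation between answers must be cyclic (to block linear separation). Reconciling these is exactly what forces the ``two chains'' design, in which the cycle is realized by one point of each answer lying in the closure of the other. Once the stream uniqueness at the two corner states is observed, the belief computation via $Min_{\leq}$ is routine, so the real content of the argument lies in the construction itself.
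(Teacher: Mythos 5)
Your proposal is correct and is essentially the paper's own proof: your space and question are isomorphic to the paper's counterexample (Example~\ref{refinement}, Figure~\ref{Counterexample}) under the relabelling $t,s,u,v \mapsto 1,2,3,4$, and both arguments defeat the two total orders on the answers by the same $Min_{\leq}$ computation at the ``corner'' states. The only cosmetic difference is that you certify solvability via Theorem~\ref{solvable_omega_refinement}$(3)\Rightarrow(1)$ on the locally closed question itself, whereas the paper exhibits a directly solvable refinement (the learning question with $\{t\}\lhd\{s\}\lhd\{v\}\lhd\{u\}$).
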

\begin{proof}
\par\noindent Let $\Prob$ be the problem in Example~\ref{refinement}, depicted on the left-hand side of Figure~\ref{Counterexample}. It is easy to see that this problem cannot be directly solvable by conditioning. Indeed, if $\{t,u\}\lhd\{s,v\}$ then $v$ is not learnable by $\lhd$-conditioning; if $\{s,v\}<\{t,u\}$ then $t$ is not learnable by $\lhd$-conditioning; while if $\{t,u\}$ and $\{s,v\}$ are equally plausible, then neither $t$ nor $v$ are learnable.

But $\Prob$ \emph{can} be refined to a directly solvable problem, namely the ``learning question" $\Prob'$ (depicted on the right-hand side of Figure~\ref{Counterexample}), which \emph{can} be directly solvable (e.g. if we set $\{t\}\lhd \{s\}\lhd\{v\}\lhd \{u\}$).
As a consequence, $\Prob$ can itself be solved by (non-direct) conditioning (with respect to the order $t<s<v<u$).
\end{proof}

This counterexample suggests a way to prove our intended universality result: it is enough to show that every solvable problem has a refinement that is directly solvable by conditioning. To do this, we first need a structural characterization of direct solvability.

\begin{lemma}\emph{(Topological Characterization of Direct Solvability by Conditioning)}
A problem  $\Prob= (\ES, {\mathcal Q})$ is directly solvable by conditioning iff  ${\mathcal Q}$ is linearly separated.
\end{lemma}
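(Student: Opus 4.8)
The plan is to prove both directions of the equivalence, relating the combinatorial notion of linear separation of the question to the existence of a successful conditioning agent based on a total order on the answers.

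\medskip

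\noindent\textbf{($\Leftarrow$) Linearly separated implies directly solvable.}
First I would assume that $\mathcal{Q}$ is linearly separated, so there is a total order $\unlhd$ on the answers together with open sets $U_A\supseteq A$ such that $U_A\cap B=\emptyset$ for every $B\lhd A$. The natural candidate agent is the AGM agent whose prior $\leq$ is canonically induced by $\unlhd$ (so $s\leq t$ iff $A_s\unlhd A_t$). I would then fix a state $s$ with true answer $A:=A_s$ and a sound-and-complete data stream $\vec{O}$ for $s$, and argue that eventually the agent's belief set contains $A$. The key observation is that $U_A$ is an open neighborhood of $s$, so by completeness of the stream there is some stage $k$ at which $\bigcap_{i\leq k}O_i\subseteq U_A$; from that stage on, the data-consistent states all lie in $U_A$, hence none of them lies in any strictly $\lhd$-smaller answer $B\lhd A$. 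Therefore every data-consistent state $t$ satisfies $A\unlhd A_t$, i.e.\ $s\leq t$, which means $s$ itself is among the $\leq$-minimal data-consistent states. Using the definition of $\mathcal{L}^{\leq}_\ES$, I would check that this forces $A\in\mathcal{L}^{\leq}_\ES(\vec{O}[n])$ for all $n\geq k$, since $A$ contains all the plausible-enough data-consistent states (they all satisfy answer $A$). This gives direct solvability.

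\medskip

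\noindent\textbf{($\Rightarrow$) Directly solvable implies linearly separated.}
Conversely, I would assume the problem is solved by conditioning with respect to the preorder $\leq$ induced by a total order $\unlhd$ on $\mathcal{Q}$, and aim to show $\unlhd$ linearly separates $\mathcal{Q}$. The goal is, for each answer $A$, to produce an open $U_A\supseteq A$ disjoint from every $B\lhd A$; equivalently, to show $A\cap\overline{\bigcup_{B\lhd A}B}=\emptyset$. I would argue by contradiction: suppose some $s\in A$ lies in the closure of $\bigcup_{B\lhd A}B$. Then every open neighborhood of $s$, and in particular every finite data-intersection along a stream for $s$, meets some $\lhd$-smaller answer. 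This means at every stage there is a data-consistent state $t$ with $A_t\lhd A_s=A$, hence $t<s$ in the induced preorder, so $s$ is \emph{never} $\leq$-minimal among the data-consistent states. The point is that such a more-plausible $t$ witnesses that the agent cannot come to believe $A$: the belief set $\mathcal{L}^\leq_\ES$ is determined by the strictly-more-plausible states, and the presence of a permanently-surviving $B$-state (with $B\lhd A$) blocks $A$ from entering the belief set, contradicting that the stream for $s\in A$ is solved.

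\medskip

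\noindent\textbf{Main obstacle.}
The delicate point, and the step I expect to be hardest, is the handling of non-well-founded priors in both directions, since AGM agents here are \emph{not} assumed standard and their belief filters need not be principal. In the ($\Leftarrow$) direction I must be careful that ``$s$ is among the minimal data-consistent states'' genuinely yields $A\in\mathcal{L}^\leq_\ES(\vec\sigma)$ via the quantifier form $\exists s\,\forall t\,(t\leq s\Rightarrow t\in A)$, rather than relying on a nonempty $Min_\leq$. In the ($\Rightarrow$) direction the subtlety is extracting, from the failure of separation, a persistent chain of ever-more-plausible data-consistent states in strictly smaller answers, and verifying that this really prevents $A$ from being believed in the limit; this requires unpacking the definition of conditional belief for the possibly-infinite descending plausibility chains that non-well-foundedness permits, and confirming that the witnessing $\lhd$-smaller states keep the answer $A$ out of the filter at cofinally many (indeed all sufficiently late) stages.
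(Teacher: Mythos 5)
Your proposal is correct and takes essentially the same approach as the paper: your ($\Leftarrow$) direction matches the paper's argument (completeness eventually confines the data to $U_{A_s}$, making $s$ itself the $\exists\forall$-witness for belief in $A_s$), and your ($\Rightarrow$) direction is simply the contrapositive of the paper's construction of the separating open sets from initial segments of data streams, both resting on the same key fact that a data-consistent state in a strictly $\lhd$-smaller answer blocks $A$ from the belief filter. The worries in your final paragraph dissolve exactly as you suspect: no analysis of infinite descending chains is needed in either direction, since in ($\Leftarrow$) the state $s$ is itself $\leq$-minimal among the data-consistent states (so the witness clause holds directly), and in ($\Rightarrow$) a single blocking state at each stage already suffices to keep $A$ out of the filter at that stage, because any witness for $A$ would have to lie in $A$ and would then dominate the blocker.
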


\begin{proof}
\emph{Left-to-right implication}: Suppose that $\Prob$ is directly solvable by conditioning with respect to (a prior $\leq$ that is canonically induced
by) a total order $\unlhd\subseteq {\mathcal Q}\times {\mathcal Q}$. Then, for every $s\in S$ choose some sound and complete data stream $\vec{O}^S=(O^s_n)_{n\in\omega}$ for $s$ (with $O^n_s\in\Obs\subseteq \tau_{\ES}$). Direct solvability by conditioning implies then that there exists some $N_s$ such that $Min_{\leq} (O^s_1, \ldots, O^s_{N_s})\subseteq A_s$. Set $U_s:= \bigcap_{i=1}^{N_s} O^s_i\in \tau_{\ES}$, so that we have $s\in U_s$ and $Min_{\leq} U_s\subseteq A_s$.
Then set $U_A:=\bigcup_{s\in A} U_s\in \tau_{\ES}$ for every answer $A\in {\mathcal Q}$. We claim that $U_A$ ``separates" $A$ from the union of all the answers $B\lhd A$ (as linear separation demands): indeed, by the construction of $U_A$, it is obvious that (1) $A\subseteq U_A$, and also that $Min_{\leq} U_A\subseteq A$. By unfolding the last clause in terms of  $\unlhd$, we obtain that: $A\unlhd B$ holds for all $B\in {\mathcal Q}$ such that $U_A\cap B\not=\emptyset$. Since $\unlhd$ is a total order on ${\mathcal Q}$, this is equivalent to: (2) $U_A\cap B=\emptyset$ for all $B\lhd A$. By (1) and (2) together, we obtain that ${\mathcal Q}$ is linearly separated.

\emph{Right-to-left implication}: Suppose  ${\mathcal Q}$ is linearly separated.  Let $\unlhd$ be a total order on ${\mathcal Q}$ that linearly separates it. This means that, for every answer $A\in {\mathcal Q}$, there exists some open set $U_A\in\tau_{\ES}$ such that $A\subseteq U_A$ and $U_A\cap B=\emptyset$ for all $B\lhd A$.
For each $s\in S$, we set $U_s:=U_{A_s}$ (where $A_s$ is the unique answer $A_s\in {\mathcal Q}$ with $s\in A_s$).

Let $\leq$ be the total preorder on $S$ canonically induced by the order $\unlhd\subseteq {\mathcal Q}\times {\mathcal Q}$ (by
$s \leq t$ iff $A_s \unlhd A_t$).
We show now that \emph{${\Prob}$ is directly solvable by conditioning with respect to $\leq$}.
For this, let $s\in S$ be any state, and $\vec{O}=(O_n)_{n\in\omega}$ be a sound and complete stream for $s$. Completeness of the stream implies that there must exist some $N\in \omega$ such that $\bigcap_{i=1}^N O_i\subseteq U_s$.

To conclude our proof, it is enough to show the following

\medskip

\emph{Claim}: For every $n\geq N$, we have
$$s\in  Min_{\leq} (\bigcap_{i=1}^n O_i)\subseteq A_s.$$

First, let us see why this Claim is enough to give us direct solvability by conditioning. The fact that $s\in  Min_{\leq} (\bigcap_{i=1}^n O_i)$ implies that
$Min_{\leq} (\bigcap_{i=1}^n O_i)\not=\emptyset$, for all $n\geq N$.
A previous observation tells us that, when applied to such data streams, the AGM agent ${\mathcal L}^{\leq}$ produces a ``principal filter", given by
$${\mathcal L}^{\leq} (O_1, \ldots, O_n)=\{P\subseteq S ~|~ Min_{\leq} (\bigcap_{i=1}^n O_i)\subseteq P\}.$$
By the Claim above we have $Min_{\leq} (\bigcap_{i=1}^n O_i)\subseteq A_s$, and hence we obtain $A_s \in {\mathcal L}^{\leq} (O_1, \ldots, O_n)$, for all $n\geq N$.

\medskip

\emph{Proof of Claim}: Let $n\geq N$.  To prove the Claim, it is enough to show the following two implications (for all states $t$):
\begin{enumerate}
\item $t\in \bigcap_{i=1}^n O_i \, \Rightarrow \,s\leq t $;
\item $t\in  Min_{\leq} (\bigcap_{i=1}^n O_i) \, \Rightarrow \, A_t=A_s$.
\end{enumerate}

To show $(1)$, let $t\in \bigcap_{i=1}^n O_i$. Then $t\in U_s$ (since $\bigcap_{i=1}^n O_i\subseteq \bigcap_{i=1}^N O_i\subseteq U_s$),
so $U_s\cap A_t\not=\emptyset$. Hence (by linear separation) we must have $A_s \unlhd A_t$, i.e., $s\leq t$.

To show $(2)$, let $t\in  Min_{\leq} (\bigcap_{i=1}^n O_i)$. This implies that $t\leq s$ (since $s\in \bigcap_{i=1}^n O_i$). But by $(1)$, we also have
$s\leq t$, and hence $s\leq t\leq s$. This means that $A_s \unlhd A_t\unlhd A_s$. But $\unlhd$ is a total order on ${\mathcal Q}$, so it follows that $A_t=A_s$.
\end{proof}

\begin{theorem}\label{question_agm}
AGM conditioning is a \emph{universal problem-solving method}, i.e., every solvable problem is solvable by some AGM agent.
\end{theorem}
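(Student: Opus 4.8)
The plan is to reduce the universality theorem to results already established in the excerpt, chaining together three facts: the topological characterization of solvability (Theorem~\ref{solvable_omega_refinement}), the combinatorial machinery refining locally closed questions into pseudo-stratifications (Propositions~\ref{countable_part_into_stratified_part} and the two lemmas on pseudo-stratifications), and the Topological Characterization of Direct Solvability by Conditioning. The key observation, foreshadowed by Proposition~\ref{genin}, is that it suffices to find a \emph{refinement} of any solvable problem that is directly solvable by conditioning: if $\mathcal{Q}'$ refines $\mathcal{Q}$ and an AGM agent solves $\mathcal{Q}'$, then that same agent solves $\mathcal{Q}$ (since stabilizing on the answer $A'_s \in \mathcal{Q}'$ entails believing the larger answer $A_s \in \mathcal{Q}$ containing it, by inference-closure of doxastic beliefs).

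First I would take an arbitrary solvable problem $\Prob = (\ES, \mathcal{Q})$. By Theorem~\ref{solvable_omega_refinement}, part $(3)$, the question $\mathcal{Q}$ admits an at most countable locally closed refinement $\Pi$. Then, by Proposition~\ref{countable_part_into_stratified_part}, the countable locally closed question $\Pi$ can be refined to a pseudo-stratification $\langle A_i \mid i < \lambda \rangle$ (with $\lambda \leq \omega$). By Lemma~\ref{pseudo-strat_linear}, every pseudo-stratification is linearly separated. Finally, by the Topological Characterization of Direct Solvability by Conditioning (the lemma immediately preceding this theorem), a linearly separated question is directly solvable by conditioning, hence solvable by an AGM agent.

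The remaining step is to confirm that \textbf{solvability is preserved when passing from the refinement back to the original question}. Here I would spell out that the AGM agent $\mathcal{L}^{\leq}$ solving the refined pseudo-stratification stabilizes, for each state $s$ and each sound-and-complete stream $\vec{O}$ for $s$, on the cell $A^{\Pi}_s$ of the refinement containing $s$, i.e.\ $A^{\Pi}_s \in \mathcal{L}^{\leq}_{\ES}(\vec{O}[n])$ for all large $n$. Since $A^{\Pi}_s \subseteq A_s$ (the refinement's cell sits inside the original answer) and $\mathcal{L}^{\leq}$ is a doxastic (filter) agent closed under supersets, we get $A_s \in \mathcal{L}^{\leq}_{\ES}(\vec{O}[n])$ for all large $n$ as well. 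Hence the very same AGM agent solves $\Prob$, completing the proof.

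I expect no serious obstacle here, since all the heavy lifting is already done in the preceding sections; the main thing to get right is \emph{chaining the implications in the correct direction} and verifying the inference-closure argument for the refinement-to-original passage. The one subtlety worth a sentence is that the refinement produced may be strictly finer than $\mathcal{Q}$ (the AGM agent genuinely refines the original question internally, as Proposition~\ref{genin} shows is sometimes unavoidable), but this is exactly permitted: universality is about \emph{some} AGM agent solving $\Prob$, not about directly solving the original, unrefined question.
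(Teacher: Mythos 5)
Your proposal is correct and follows essentially the same route as the paper's own proof: chain Theorem~\ref{solvable_omega_refinement} with Proposition~\ref{countable_part_into_stratified_part} and Lemma~\ref{pseudo-strat_linear} to obtain a linearly separated refinement, apply the right-to-left direction of the direct-solvability characterization, and pass the solution back to the original coarser question. Your explicit superset-closure (filter) argument for that final passage is exactly the step the paper compresses into ``obvious from the definition of solvability.''
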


\begin{proof}
Let $\Prob$ be a solvable problem. From Theorem 2, Proposition 3 and Lemma 2,
it follows that $\Prob$ has a linearly separated refinement $\Prob'$. By Lemma 5, that refinement is (directly) solvable by an AGM agent $L^{\leq}$. It is obvious (from the definition of solvability) that any doxastic agent which solves the more refined problem $\Prob'$ solves also the original problem $\Prob$.
\end{proof}

\begin{corollary}\label{learning agm}
AGM conditioning is a \emph{universal learning method}, i.e.,
\emph{every learnable space is learnable by some AGM agent}.
\end{corollary}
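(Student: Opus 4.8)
The plan is to recognize that this corollary is nothing more than the specialization of Theorem \ref{question_agm} to one distinguished question. Recall that, by definition, an epistemic space $\ES = (S, \Obs)$ is learnable \emph{by an agent ${\mathcal L}$} precisely when the inductive problem $(\ES, {\mathcal Q}_S)$ determined by the learning question ${\mathcal Q}_S = \{\{s\} ~|~ s \in S\}$ is solvable \emph{by ${\mathcal L}$}. Thus learnability carries no content beyond solvability applied to this maximally refined partition, and the whole argument reduces to a definitional unfolding wrapped around the previous theorem.

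Concretely, I would first assume that $\ES$ is learnable. Unfolding the definition, this says exactly that the problem $\Prob := (\ES, {\mathcal Q}_S)$ is solvable. I would then apply Theorem \ref{question_agm} to $\Prob$: since $\Prob$ is solvable, there is some AGM agent ${\mathcal L}^{\leq}$ (forming beliefs by conditioning on a prior plausibility order) that solves it. Finally, reading the definition of learnability in the other direction, the fact that ${\mathcal L}^{\leq}$ solves the learning problem $(\ES, {\mathcal Q}_S)$ is, verbatim, the assertion that $\ES$ is learnable by the AGM agent ${\mathcal L}^{\leq}$. This yields the claim.

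I expect no genuine obstacle here: all the substantive work sits in Theorem \ref{question_agm} and the structural results it rests on (the solvability characterization of Theorem \ref{solvable_omega_refinement}, together with the pseudo-stratification and linear-separation machinery developed in the preceding lemmas). The only point requiring any care is the bookkeeping identification of ``the learning question is solvable'' with ``the space is learnable'', which is immediate from the definitions as stated.
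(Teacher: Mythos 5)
Your proposal is correct and coincides with the paper's own proof, which reads simply ``Apply the previous result to the finest question ${\mathcal Q}:=\{\{s\}~|~s\in S\}$'': both arguments unfold learnability as solvability of the learning question and invoke Theorem~\ref{question_agm}. Your additional bookkeeping about the agent ${\mathcal L}^{\leq}$ solving $(\ES,{\mathcal Q}_S)$ being verbatim the definition of learnability by that agent is exactly the implicit content of the paper's one-line proof.
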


\begin{proof}
Apply the previous result to the finest question ${\mathcal Q}:=\{\{s\}~|~s\in S\}$.
\end{proof}

In contrast, recall that the counterexample in Proposition~\ref{non-universal standard} showed that
\emph{standard} AGM agents have a very limited problem-solving power. Standard conditioning is not a universal learning method (while general $AGM$ conditioning is universal).
This means that \emph{allowing prior plausibility orders that are non-wellfounded is essential for achieving universality of conditioning}. Beliefs generated in this way may occasionally fail to be globally consistent. (Indeed, note that in the counterexample from Proposition~\ref{non-universal standard}, the beliefs of the non-standard $AGM$ agent who learns the space are initially globally inconsistent. In conclusion, \emph{occasional global inconsistencies are the unavoidable price for the universality of $AGM$ conditioning}.

\section{Conclusions and Connections to Other Work}

The general topological setting for problem-solving assumed here is a variation of the one championed by Kelly in various talks \cite{Kell11} and in unpublished work \cite{Kel15, KellHantiSimplicity}, though until recently we did not realize this close similarity.
Our topological characterizations of verifiable, falsifiable and decidable properties are generalizations of  results by Kelly \cite{Kell96}, who proved characterizations for the special case of Baire spaces.\footnote{In unpublished work \cite{KellHantiSimplicity} the authors claim a characterization of solvability in a general setting. Their characterization is sightly ``looser" than ours, and can be easily obtained from ours. Our tighter characterization is the one needed for proving universality.} Our result on learning-universality (Corollary \ref{learning agm}) is also a generalization of analogue results by Kelly \cite{Kel98,Kel98a}, and Kelly, Schulte and Hendricks \cite{KSH95}. But our generalization to arbitrary spaces is highly non-trivial, requiring the use of the $TD$ characterization.
In contrast, the Baire space satisfies the much stronger separation axiom $T1$, which trivializes the specialization order, and so the proof of learning-universality is much easier in this special case: \emph{any} total $\omega$-like ordering of the space can be used for conditioning.
Nevertheless, in a sense, this result is just a topological re-packaging of one of our own previous results \cite{Gie10, BGS11,Baltag:2014ac}.

While writing this paper, we learned that our $TD$ characterization of learnability (Corollary \ref{TD learnability}) was independently re-proven by Konstantin Genin (\cite{Genin14}, unpublished manuscript), soon after we announced its proof. This characterization is actually a topological translation of a classical characterization of identifiability in the limit \cite{Ang80}, and in fact it also follows from a result by de Brecht and Yamamoto \cite{Yamamoto:2010}, who prove it for so-called ``concept spaces''.

Our key new results are far-reaching and highly non-trivial: the topological characterization of solvability (Theorem \ref{solvable_omega_refinement}), and the universality of AGM condition for problem-solving (Theorem \ref{question_agm}). They required the introduction of new topological concepts (e.g., pseudo-stratifications and linearly separated partitions), and some non-trivial proofs of new topological results.

Philosophically, the importance of these results is that, on the one hand they fully vindicate the general topological program in Inductive Epistemology started by Kelly and others \cite{Kell96,Schulte:1996aa}, and on the other hand they reassert the power and applicability of the AGM Belief Revision Theory against its critics. To this conclusion, we need to add an important proviso: our results show that, in order to achieve problem-solving universality, AGM agents need to (a) be ``creative", by \emph{going beyond the original problem} (i.e., finding a more refined problem that can be solved directly, and forming prior beliefs about the answer to this more refined question), and (b) admit \emph{non-standard priors}, which occasionally will lead to \emph{beliefs that are globally inconsistent} (although still locally consistent). Such occasional global inconsistencies can give rise to a type of ``infinite Lottery Paradox". But this is
the price that AGM agents \emph{have to pay} in order to be able to solve every solvable question. Whether or not this is a price that is worth paying is a different, more vague and more ``ideological" question, although a very interesting one. But this question lies beyond the scope of this paper.

\section{Acknowledgments}

We thank Johan van Benthem, Nick Bezhanishvili, Konstantin Genin, Thomas Icard and Kevin Kelly for their useful feedback on issues related to this paper. Johan helped us place belief-based learning within the larger context of long-term doxastic protocols \cite{MergingFrameworks}, and beyond this he gave us his continuous support and encouragement for our work on this line of inquiry.  Nick pointed to us the connections between our work and the notions of TD-space and locally closed set. Konstantin pointed to us the connections to the notion of stratification and gave the counterexample proving Proposition~\ref{genin}. His critical feedback on our early drafts was really essential for clarifying our thoughts and cleaning up our proofs, and so it's fair to say that this paper in its current form owes a lot to Konstantin Genin. Thomas Icard's comments on a previous draft and our friendly interactions with him on related topics during our Stanford visits are very much appreciated. Finally, Kevin Kelly's work forms of course the basis and the inspiration for ours. Our frequent discussions with him in recent years influenced the development of our own perspective on the topic. He also gave us excellent reference tips concerning the history of the connections between topology and formal epistemology, as well as concerning his more recent work on related issues.

Nina Gierasimczuk's work on this paper was funded by an Innovational Research Incentives Scheme Veni grant 275-20-043, Netherlands Organisation for Scientific Research (NWO). Sonja Smets was funded in part by an Innovational Research Incentives Scheme Vidi grant from NWO, and by the European Research Council under the European Community's Seventh Framework Programme (FP7/2007-2013)/ERC Grant agreement no. 283963.

\nocite{*}
\bibliographystyle{eptcs}
\providecommand{\noopsort}[1]{}

\end{document}